\title{Reducing Transducer Equivalence to Register Automata Problems solved by ``Hilbert Method''}
\author[]{Adrien Boiret}
\author[]{Radosław Piórkowski}
\author[]{Janusz Schmude}
\affil[]{University of Warsaw}
\date{}
\newif\iflong
\begin{document}
	
	\maketitle			
	\begin{abstract}
		In the past decades, classical results from algebra, including Hilbert's Basis Theorem,
		had various applications in formal languages, including a proof of the Ehrenfeucht Conjecture, decidability of HDT0L sequence equivalence,
		and decidability of the equivalence problem for functional tree-to-string transducers.
		
		In this paper, we study the scope of the algebraic methods mentioned above,
		particularily as applied to the functionality problem for register automata, and equivalence for functional register automata.
		We provide two results, one positive, one negative.
		The positive result is that functionality and equivalence are decidable for MSO transformations on unordered forests.
		The negative result comes from a try to extend this method to decide functionality and equivalence on macro tree transducers.
		We reduce macro tree transducers equivalence to an equivalence problem
		for some class of register automata naturally relevant to our method.
		We then prove this latter problem to be undecidable.
	\end{abstract}

	\pagebreak
	\section{Introduction}\label{secIntroduction}
	
	The study of finite-state machines, such as transducers~\cite{Rounds70,Choffrut03,Maneth15} or register automata~\cite{AlurCerny10,AlurDAntoni17},
	and of logic specifications, such as MSO-definable transformations~\cite{Courcelle94},
	provides a theoretical ground to study document and data processing.
	
	In this paper,
	we will consider the equivalence problem of functional transducers.
	We focus on register automata, i.e. transducers that store  values in a finite number of registers that can be updated or combined
	after reading an input symbol.
	Streaming String Transducers (SST)~\cite{AlurCerny10} and Streaming Tree Transducers (STT)~\cite{AlurDAntoni17} are classes of register automata (see for example~\cite{AlurDDRY13}) where the equivalence is decidable
	for the \emph{copyless} restriction, i.e. the case where each register update cannot use the same register twice.
	This restriction makes SST equivalent to MSO-definable string transformations.
	Macro tree transducers (MTT)~\cite{EngelfrietVogler85}, an expressive class of tree transducers for which equivalence decidability remains a challenging open problem,
	can be seen as register automata, whose registers store tree contexts.
	Although equivalence is not known to be decidable for the whole class, there exists a linear size increase fragment of decidable equivalence,
	that is equivalent to MSO-definable tree transformations, and can be characterized by a restriction on MTT quite close to copyless~\cite{EngelfrietManeth03}.
	
	
	Some equivalence decidability results have been proven on register automata without copyless restrictions~\cite{SeidlMK15,BenediktDSW17},
	by reducing to algebraic problems 
	such as 
	ideal inclusion and by applying Hilbert's Basis Theorem and other classical results of algebraic geometry.
	In this paper we will refer to this as the ``Hilbert Method''. 
	This method was used to prove diverse results, dating back to at least the proof of the Ehrenfeucht Conjecture~\cite{AlbertLawrence85}, and the sequence problem for HDT0L~\cite{CulikKarhumaki86,Honkala00}.
	It has recently found new applications in formal languages; for example, equivalence was proven decidable for general tree-to-string transducers
	by seeing them as copyful register automata on words~\cite{SeidlMK15}.
	
	In this paper, we use an abstraction of these previous applications of the ``Hilbert Method'' as presented in~\cite{toolbox}.
	We apply these preexisting results to the study of unordered forest transductions -- and notably MSO functions.
	Note that equivalence of MSO-definable transductions on unordered forests is not a straightforward corollary of the ordered case, as the loss of order makes equivalence more difficult to identify.
	We 
	also try to apply those methods to obtain decidability of MTT equivalence.
	For unordered forests, we obtain a positive result, showing that register automata on forest contexts with one hole have decidable functionality and equivalence.
	For the attempt to study MTT, we prove an undecidability result on register automata using polynomials and composition,
	which means the natural extension of this approach does not yield a definitive answer for the decidability of MTT equivalence.
	
	\subsection*{Layout}
	
	\textbf{Section~\ref{secPrelim}} presents the notions of algebra, register automata, and
	the notions necessary to use an abstraction of the ``Hilbert Method'' as presented in~\cite{toolbox}.
	\textbf{Section~\ref{secForests}} is dedicated to the proof of the positive result that
	we can apply the ``Hilbert Method'' to contexts of unordered forests with at most one hole (i.e. the algebra of unordered forests with limited substitution).
	This provides a class of register automata encompassing MSO functions on unordered forests where functionality is decidable.
	Finally, \textbf{Section~\ref{secPolySub}} describes how 
	applying a method similar as in Section~\ref{secForests}
	to study MTT equivalence
	leads to studying register automata on the algebra of polynomials with the substitution operation,
	a class whose functionality and equivalence we prove to be undecidable.
	\section{Preliminaries}\label{secPrelim}
	
	\partitle{Algebras}
	An \emph{algebra} $\aalg=(\aalgset,\enum{\opalg}1n)$ is a (potentially infinite) set of elements $\aalgset$, and a finite number of operations $\enum{\opalg}1n$.
	Each operation is a function $\opalg:\aalgset^k\rightarrow \aalgset$ for some $k\in\Nat$.
	
	
	
	\partitle{Polynomials}
	For an algebra $\aalg$ and a set $X=\{\enum x1n\}$ of variables, we note $\aalgset[X]$ the set of terms over $A\cup X$.
	A \emph{polynomial function} of $\aalg$ is a function $f:\aalgset^k\rightarrow \aalgset$. 
	For example on $\aalg=(\Rat,+,\times)$, the term $\times(+(x,2),+(x,y))$ induces the polynomial function $f:(x,y)\mapsto (x+2)(x+y)$.
	The definition of polynomial functions can be extended to functions $f\colon\aalgset^k\rightarrow \aalgset^m$ by product of their output:
	if $\enum f1m$ are polynomial functions from $\aalgset^k$ to $\aalgset$,
	then $f'\colon \ol a\in\aalgset^k\mapsto (f_1(\ol a),\dots, f_m(\ol a))$ is a polynomial function from $\aalgset^k$ to $\aalgset^m$.
	Note that polynomial functions are closed under composition.
	
	One can define the \emph{algebra of polynomials over $\aalg$ with variable set $X$}, denoted $\aalg[X]$.
	Its elements are equivalence classes of terms over $\aalgset\cup X$ with the operations of \aalg, where two terms are called \emph{equivalent} if they induce identical polynomial functions. 
	$\aalg[X]$ can be seen as an algebra that subsumes \aalg, with natural definition of operations.
	A classical example of this construction is the ring of polynomials $(\Q[x],+,\times)$, obtained from the ring $(\Q,+,\times)$.
	

	
	By adding the substitution operation \pcomp-- to $\aalg[X]$,
	we get a new algebra called a \emph{composition algebra of polynomials} and denoted \aalgsubs.
	Homomorphisms of such algebras are called \emph{composition homomorphisms}.
	For brevity we write $\comp {-}{x_i}{-}$ for the substitution of a single $x_i\in X$.
	Examples of such algebras include well-nested words with a placeholder symbol ``$?$'', as used in the registers of Streaming Tree Transducers~\cite{AlurDAntoni17},
	or tree contexts with variables in their leaves, as used in Macro Tree Transducers~\cite{EngelfrietVogler85}.
	
	\partitle{Simulation}
	Following the abstractions as they are presented in~\cite{toolbox}
	\footnote{As of this version's redaction, Part 11 of~\cite{toolbox} is the part relevant for this paper. This and any theorem or page number can change in future versions of~\cite{toolbox}},
	we define simulations between algebras in a way that is relevant to the use of the ``Hilbert Method''.
	\begin{definition}\label{defSimulation}
		Let $\aalg$ and $\balg$ be algebras. We say that $\alpha : \aalgset \to \balgset^n$ is a \emph{simulation} of \aalg in \balg
		if for every operation $\opalg : \aalgset^m \to \aalgset$ of $\aalg$,
		there is a polynomial function $f : \balgset^{m \times n} \to \balgset^n$ of $\balg$
		such that $\alpha\circ \opalg=f\circ\alpha$, where $\alpha$ is defined from $\aalgset^m$ to $\balgset^{m \times n}$ coordinate-wise.
		If such a simulation $\alpha$ exists, we say that \aalg is \emph{simulated by} \balg ($\aalg\simulatedBy\balg$).
		\begin{figure*}[h]
			\begin{center}
				\begin{tikzpicture}
				\node (am) at (0,0) {${A^m}$};
				\node (a1) at (0,-2) {${A}$};
				\node (bmn) at (3,0) {$B^{m\times n}$};
				\node (bn) at (3,-2) {$B^n$};
				
				\draw[->] (am) edge node[left] {$\opalg$} (a1);
				\draw[->] (am) edge node[above] {$(\alpha,\cdots,\alpha)$} (bmn);
				\draw[->] (bmn) edge node[right] {$f$} (bn);
				\draw[->] (a1) edge node[below] {$\alpha$} (bn);
				\end{tikzpicture}
			\end{center}
		\end{figure*}
	\end{definition}
	The following lemma states that simulations extend to composition algebras.
	\begin{lemma}\label{lemSimSub}
		Let $\boldsymbol{Q}[X]=(\Q[X],+,\times)$.
		If $\aalg\simulatedBy \boldsymbol{Q}[X]$ and $\aalgset$ is an infinite set, then $\aalg[Y]^\submark\simulatedBy \boldsymbol{Q}[X][Y]^\submark$.
	\end{lemma}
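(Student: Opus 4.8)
\noindent\emph{Proof idea.}
The plan is to push a simulation of $\aalg$ through the term structure of $\aalg[Y]^\submark$. Start from a simulation $\alpha\colon\aalgset\to\Q[X]^{n}$ and, for each operation $\opalg$ of $\aalg$, a polynomial function $f_{\opalg}$ of $\boldsymbol{Q}[X]$ with $\alpha\circ\opalg=f_{\opalg}\circ(\alpha,\dots,\alpha)$. For every variable $y\in Y$ introduce a block of $n$ fresh variables $\bar y=(y^{1},\dots,y^{n})$ together with $n$ more auxiliary fresh variables; the variable set stays finite, so we keep writing $Y$ for it and $\boldsymbol{Q}[X][Y]^\submark$ for the associated composition algebra. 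Define $\widetilde\beta$ from terms over $\aalgset\cup Y$ to $(\boldsymbol{Q}[X][Y])^{n}$ by structural recursion: send $a\in\aalgset$ to $\alpha(a)$, send a variable $y$ to $\bar y$, and send $\opalg(t_{1},\dots,t_{k})$ to $f_{\opalg}(\widetilde\beta(t_{1}),\dots,\widetilde\beta(t_{k}))$. The candidate simulation $\beta$ sends the equivalence class of $t$ to $\widetilde\beta(t)$.

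Granting that $\beta$ is well defined, the simulation squares of Definition~\ref{defSimulation} come out by inspection. For the operations inherited from $\aalg$ the same $f_{\opalg}$, now read over $\boldsymbol{Q}[X][Y]$, work by the very definition of $\widetilde\beta$. For a substitution $\comp{-}{y}{-}$ one proves, by structural induction on a term for $p$, that $\beta(\comp{p}{y}{q})$ is obtained from $\beta(p)$ by simultaneously substituting the tuple $\beta(q)$ for the block $\bar y$; this block substitution is a polynomial function of $\boldsymbol{Q}[X][Y]^\submark$ — first rename $y^{1},\dots,y^{n}$ to the auxiliary variables inside $\beta(p)$, then substitute the coordinates of $\beta(q)$ for those renamed variables, the renaming being needed exactly so that iterated single substitutions compute the simultaneous one without capture. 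None of this uses that $\aalgset$ is infinite.

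The delicate point is well-definedness of $\beta$, that is: if two terms $t,t'$ over $\aalgset\cup Y$ induce the same polynomial function $\aalgset^{|Y|}\to\aalgset$, then $\widetilde\beta(t)=\widetilde\beta(t')$ as polynomials. A short structural induction — whose inductive step is precisely the simulation square for $\opalg$ — gives the transfer identity: evaluating $\widetilde\beta(t)$ by sending each $y^{i}$ to the $i$-th coordinate of $\alpha(a_{y})$ returns $\alpha$ applied to the value of $t$ at $\bar a$, for every $\bar a=(a_{y})_{y\in Y}$ in $\aalgset^{|Y|}$. Hence $t\equiv t'$ forces $\widetilde\beta(t)$ and $\widetilde\beta(t')$ to agree on every evaluation tuple coming from $\alpha$. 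The main obstacle is to upgrade this pointwise agreement to a formal identity of polynomials, and here some care is genuinely needed: an arbitrary simulation can over-distinguish — a redundant coordinate such as $\alpha_{2}=\alpha_{1}^{2}$ makes $\widetilde\beta$ separate equivalent terms. The fix, and the place where $\aalgset$ being infinite is used, is to first replace $\alpha$ by a simulation whose image is Zariski-dense in an affine space; for such an $\alpha$ a polynomial that vanishes on all tuples over $\alpha(\aalgset)$ must be zero, so $\widetilde\beta$ respects equivalence. I expect obtaining this ``dense'' simulation — intuitively, discarding the algebraic dependencies among the coordinates of $\alpha$ and using that the infinite set $\alpha(\aalgset)$ fills out its Zariski closure — to be the real content of the lemma; the remaining steps are the structural inductions above.
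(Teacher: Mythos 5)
Your architecture is the one the paper uses: extend $\alpha$ over terms by structural recursion (with a block of $n$ fresh variables per element of $Y$), verify the commuting squares --- including the simultaneous block substitution for $\comp{-}{y}{-}$, which you handle more carefully than the paper does --- and reduce everything to well-definedness of $\widetilde\beta$ on equivalence classes via the transfer identity $\widetilde\beta(t)[\bar y:=\alpha(\bar a)]=\alpha(t(\bar a))$. You have also correctly isolated the one place where the hypothesis that $\aalgset$ is infinite must enter, and your worry there is legitimate, not paranoia: take $\aalg=(\Nat,\times)$, $\alpha(a)=(a,a^2)$ and the perfectly valid update $f\bigl((u_1,v_1),(u_2,v_2)\bigr)=(u_1u_2,\,u_1^2v_2)$; then the equivalent terms $y\times z$ and $z\times y$ receive the distinct images $(y_1z_1,\,y_1^2z_2)$ and $(y_1z_1,\,z_1^2y_2)$, which agree on $\alpha(\Nat)\times\alpha(\Nat)$ but not as polynomials. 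So ``agreement on an infinite subset'' really is insufficient once the simulation has more than one coordinate. The paper's own proof elides this: it asserts that polynomial functions equal on the infinite set $\alpha(\aalgset)$ are equal everywhere, which is only valid when $\alpha$ lands in a single copy of $\Q[X]$ (so that evaluation points form a product of infinite subsets of an integral domain), or more generally when $\alpha(\aalgset)$ is Zariski-dense; this happens to hold in the paper's applications (e.g.\ $\phi:\setUF\to\Q[x]$ is one-dimensional) but is not a consequence of the stated hypotheses.

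The genuine gap is that you stop exactly at the step you yourself identify as the crux. The proposed repair --- ``first replace $\alpha$ by a simulation whose image is Zariski-dense in an affine space'' --- is announced but not constructed, and it is not at all clear it is always possible: projecting $\alpha$ onto a subset of coordinates with dense image need not remain a simulation (the update polynomials may genuinely need the discarded coordinates), and quotienting $\Q[X][Y]$ by the ideal of polynomials vanishing on $\alpha(\aalgset)$-tuples does not land back in a polynomial algebra. As written, then, the central claim of the lemma is not proved. To close the argument you should either (a) restrict to the case actually needed, where $\alpha$ maps into $\Q[X]^1$ or has dense image, and spell out the product/density argument that a polynomial vanishing on $S_1\times\dots\times S_k$ with each $S_i$ infinite (resp.\ dense) is zero, or (b) genuinely supply the construction of a dense simulation, which I agree would be the real content. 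Separately, you never address injectivity of $\beta$, which the intended notion of simulation requires and which the paper checks in three lines; add it.
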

	

	\iflong
	\begin{proof}
		Let 
		$\alpha\colon \aalg=(A, \enum \rho 1n)\hra ((\Q[X])^m, \enum f1n)$ be a simulation, where $\enum f1n$ are polynomials of operations of \balg. 
		We consider the obvious extension of $\alpha$ to the term representations of polynomial functions:
		$\widetilde{\alpha}\colon\aalg[Y]^{\submark}\ra\Q[X][Y]^{\submark}$, such that $\widetilde{\alpha}(Y)=Y$.
		
		Now we show $\widetilde{\alpha}$ is a simulation from $\aalg[Y]^{\submark}$ to $\Q[X][Y]^{\submark}$:
		for every operation $\rho_i$ of $A$, $\widetilde{\alpha}\circ \opalg_i=f_i\circ\widetilde{\alpha}$.
		In addition, $\widetilde{\alpha}\circ\comp {-}{Y}{-}=\comp {-}{Y}{-}\circ\widetilde{\alpha}$.
		$\widetilde{\alpha}$ is a function (i.e. preserves equivalence of terms): if terms $t_1, t_2\in\aalgset[Y]$ are equivalent, i.e. induce the same functions on $A$,
		then $\widetilde{\alpha}(t_1)(Y), \widetilde{\alpha}(t_2)(Y)$ are polynomial functions that are equal on $\alpha(A)$. Since $\alpha(A)$ is an infinite subset of $\Q[X]$,
		it is a known property of $\Q[X][Y]$ that $\alpha(t_1), \alpha(t_2)$ are equal everywhere, i.e. $\widetilde{\alpha}(t_1)(Y)=\widetilde{\alpha}(t_2)(Y)$ on $\Q[X]$.
		
		The proof of injectivity is straightforward.
		Let $P(Y), Q(Y)\in\aalgset[Y]$ be any two nonequivalent terms. Then there is a tuple $\ol a$ of $A$ such that $P(\ol a)\neq Q(\ol a)$.
		If $\widetilde{\alpha}(P)=\widetilde{\alpha}(Q)$, we would have $\alpha(P(\ol a))=\alpha(P)[Y:=\ol a]=\alpha(Q)[Y:=\ol a]=\alpha(Q(\ol a))$.
		This would contradict the injectivity of $\alpha$ on $\aalgset$.
	\end{proof}
	\else
	\begin{proof}
		If there is a simulation $\alpha$ from $\aalg$ to $\Q[X]$,
		then $\alpha$ can be extended into a simulation $\widetilde{\alpha}$ from $\aalg[Y]$ to $\boldsymbol{Q}[X][Y]$
		by setting $\alpha(Y)=Y$, and requiring $\widetilde{\alpha}$ to be a homomorphism. 
		It is important to check that $\widetilde{\alpha}$ indeed is a function (i.e. preserves equivalence of terms): if terms $t_1, t_2\in\aalgset[Y]$ 
		induce the same functions on $A$,
		then $\widetilde{\alpha}(t_1)(Y)$ and $\widetilde{\alpha}(t_2)(Y)$ are polynomial functions that are equal on $\alpha(A)$. Since $\alpha(A)$ is an infinite subset of $\Q[X]$, %
		$\alpha(t_1)$ and $\alpha(t_2)$ are equal everywhere. 
		The proof of injectivity is straightforward. Let $P(Y), Q(Y)\in\aalgset[Y]$ be any two nonequivalent terms. Then there is a tuple $\ol a$ of $A$ such that $P(\ol a)\neq Q(\ol a)$.
		If $\widetilde{\alpha}(P)=\widetilde{\alpha}(Q)$, we would have $\alpha(P(\ol a))=\alpha(P)[Y:=\ol a]=\alpha(Q)[Y:=\ol a]=\alpha(Q(\ol a))$.
		This would contradict the injectivity of $\alpha$ on $\aalgset$.
	\end{proof}
	
	\fi
	
	\partitle{Typed Algebras} Some of the algebras we consider are \emph{multi-sorted},
	which is to say that their elements are divided between a finite number of types.
	A \emph{multi-sorted algebra} is an algebra $\aalg=(\aalgset,\enum{\opalg}1n)$ such that:
	\begin{itemize}
		\item $\aalgset$ can be partitioned into $\enum{\aalgset}1m$,
		\item each operation $\rho$ is a function $\opalg:\aalgset_{i_0}\times\dots\times\aalgset_{i_k}\rightarrow \aalgset_j$.
	\end{itemize}
	To each $a\in\aalgset$ we associate a \emph{type}, which is a unique $i$ such that $a\in\aalgset_i$.
	Note that in a multi-sorted algebra \aalg
	polynomial functions are typed $f:\aalgset_{i_0}\times\dots\times\aalgset_{i_k}\rightarrow \aalgset_j$,
	and simulation and substitutions must be defined type-wise.
	
	\partitle{Register automata}
	In this paper we will work on register automata that make a single bottom-up pass on an input ranked tree,
	use a finite set of states, and a finite set of registers with values in \aalgset for some algebra $\aalg$.
	When the automaton reads an input symbol, it updates its register values as a polynomial function of \aalg applied to the register values in its subtrees.
	This formalism is already present in the literature: streaming tree transducers~\cite{AlurCerny10}, for example,
	are register automata on input words and register values in the algebra of words on an alphabet $\SigIn$, with the concatenation operation.
	
	A \emph{signature} $\SigIn$ is a finite set of symbols $a$, each with a corresponding finite rank $\rank(a)\in\Nat$.
	A \emph{ranked tree} is a term on this signature $\SigIn$: if $a\in\SigIn$, $\rank(a)=n$, and $\enum t1n$ are trees, then $a(\enum t1n)$ is a tree.
	
	\begin{definition}\label{defRA}
		Let $\aalg=(\aalgset,\enum{\opalg}1n)$ be an algebra. A \emph{bottom-up register automaton} with values in \aalg (or \aRA\aalg)
		is a tuple $\M=(\SigIn,n,Q,\rul,\outfun)$, where:
		\begin{itemize}
			\item $\SigIn$ is a ranked set
			\item $n$ is the number of $\aalgset$-registers used by $\M$
			\item $Q$ is a finite set of states
			\item $\rul$ is a finite set of transitions of form $a(\enum q1k)\rightarrow q,f$
			where $a\in\SigIn$ of rank $k$, $\lbrace q,\enum q1k\rbrace\subset Q$, and $f:\aalgset^{n\times k}\rightarrow \aalgset^n$ a polynomial function of \aalg.
			\item $\outfun$ is a partial output function that to some states $q\in Q$ associates $f_q:\aalgset^n\rightarrow\aalgset$ a polynomial function of \aalg.
		\end{itemize}
		A \emph{configuration} of \M is a $n$-uple $(q,\ol r)$ where $q\in Q$ is a state and $\ol r=(\enum r1n)\in\aalgset^n$ is a $n$-uple of register values in \aalgset.
		We define by induction the fact that a tree $t$ can reach a configuration $(q,\ol r)$, noted $t\rightarrow(q,\ol r)$:
		If $a\in\SigIn$ of rank $k$, $a(\enum q1k)\rightarrow q,f$ a rule of \rul,
		and for $0\leqslant i\leqslant k$, $t_i\rightarrow(q_i,\ol r_i)$,
		then $$a(\enum t1k)\rightarrow (q,f(\enum{\ol r}1k)).$$
		
		\M determines a relation $\sem M$ from trees to values in \aalgset. It is defined using $\outfun$ as a final step:
		if $\outfun(q)=f_q$, and $t\rightarrow(q,\ol r)$, then $f_q(\ol r)\in\sem\M(t)$.
	\end{definition}
	
	We say that a \aRA\aalg is \emph{functional} if \sem\M is a function.
	We say that a \aRA\aalg is \emph{deterministic} if for all $a,\enum q1k$ there is at most one rule $a(\enum q1k)\rightarrow q,f$ in \rul.
	Any deterministic \aRA\aalg is functional.
	
	Note that on a multi-sorted algebra, we further impose that
	every state $q$ has a certain type $\aalgset_{i_1}\times\dots\times\aalgset_{i_n}$,
	i.e. if $(q,\ol r)$ is a configuration of $M$, then $\ol r \in\aalgset_{i_1}\times\dots\times\aalgset_{i_n}$.
	
	\partitle{``Hilbert Method''}
	We now describe an abstraction~\cite{toolbox} of the classical algebra methods
	that are used in the literature~\cite{SeidlMK15,BenediktDSW17}
	to decide equivalence of functional register automata over certain algebras (e.g. $(\Sigma^*,\cdot)$)
	using what we will refer to as the ``Hilbert Method''.
	More specifically, as it is always easy to prove the semi-decidability of non-equivalence of functional \aRA\aalg,
	by guessing two runs on the same input with different outputs,
	this method aims to prove that the functionality and equivalence problems over functional \aRA\aalg are semi-decidable.

	This method can be described as a 4-step process:
	
	\begin{itemize}
		\item Simulate \aalg by $\boldsymbol{Q}$, hence reducing \aRA\aalg equivalence to \aRA{$\boldsymbol{Q}$} equivalence
		\item Functional \aRA{$\boldsymbol{Q}$} equivalence can be reduced to \aRA{$\boldsymbol{Q}$} zeroness, i.e. checking if a \aRA{$\boldsymbol{Q}$} only outputs 0.
		\item \aRA{$\boldsymbol{Q}$} zeroness can be reduced to ideal inclusion problem in $\ol\Q[X]$, i.e.  the ring of polynomials with algebraic numbers as coefficients
		\item Ideal inclusion problem in $\ol\Q[X]$ is decidable
	\end{itemize}

	These results exist in the literature. We will provide references as well as an intuition of the main mechanisms in these proofs.
	For the first point, the reduction from \aRA\aalg equivalence to \aRA{$\boldsymbol{Q}$} equivalence, an example is provided in~\cite{SeidlMK15}, where \aalg is $(\Sigma^*,\cdot)$.
	In essence, this part of the method amounts to a simulation as described in Definition~\ref{defSimulation}.
	If $\aalg\simulatedBy\balg$ with simulation $\alpha$, then any \aRA{\aalg} $M$ can be simulated by a \aRA{\balg} $M'$,
	in the sense that $M$ outputs $a\in\aalgset$ for an input $t$ if and only $M'$ outputs $\alpha(a)\in\balgset^k$ for the same input $t$.
	This gives a reduction from \aRA\aalg equivalence to \aRA\balg equivalence.

	The second point is presented in the proof of Theorem 11.8 of~\cite{toolbox}.
	If $M$ and $M'$ are two functional \aRA{$\boldsymbol{Q}$} of same domain,
	using a natural product construction,
	one can create $M''$ that runs $M$ and $M'$ in parallel,
	then computes the difference of outputs between $M$ and $M'$.
	Thus $M$ and $M'$ are equivalent iff $M''$ only outputs 0.

	The third point can be found in the proof of Theorem 11.8 of~\cite{toolbox}.
	The idea is to express \aRA{$\boldsymbol{Q}$} zeroness as a set problem
	(with polynomial grammars as an intermediary in~\cite{toolbox}).
	We want to find for each state $q$ the set $X_q$ of register values that $M$ can hold in state $q$.
	These states obey to some inclusion equations: if $a(\enum q1k)\rightarrow q,f$ is a rule of $M$, then $f(X_{q_1}\times...\times X_{q_n})\subseteq X_q$.
	Furthermore, if zeroness is true for $M$, then for every $q$ such that final output function $f_q$ is defined, $f_q(X_q)\subseteq\lbrace0\rbrace$.
	Interestingly, if such a family of sets of $\Q$
	$(X_q)_{q\text{ state of }M}$ exists to satisfy those inclusions,
	then there exists a family of ideal sets of $\ol\Q$,
	$(S_q)_{q\text{ state of }M}$, that also satisfy those inclusions (Lemma 11.5 of~\cite{toolbox}).

	The fourth point uses classical algebra results to find such a family of ideal sets.
	The proof, as it is presented in Theorem 11.3 of~\cite{toolbox}, works as follows:
	Hilbert's Basis Theorem ensures that all families of ideals $(S_q)_{q\text{ state of }M}$ can be enumerated.
	For each of these families, it can be checked using Groebner Basis whether it respects a set of inclusions or not.
	Eventually, if a solution exists, it will be found, making this ideal problem, and thus \aRA{$\boldsymbol{Q}$} zeroness, semi-decidable.

	For this paper, we point out a few natural extensions to those methods
	and establish Theorem~\ref{thmPolyRA} and Corollary~\ref{corRA-Equivalence}
	as a basis for our work.
	
	The first remark is that one can consider more problems than functional equivalence. Functionality itself can be studied with these methods. It is preserved by simulations, and \aRA{$\boldsymbol{Q}$} functionality can be reduced to zeroness: instead of comparing two functional
	\aRA{$\boldsymbol{Q}$} in the second point, $M''$ can run two copies of the same \aRA{$\boldsymbol{Q}$} $M$ and compute the output difference.
	$M$ is functional iff $M''$ only outputs 0.
	
	The second remark is that the classical algebra results (Hilbert Basis Theorem, Groebner Basis, algebraic closure of a field...)
	used in the fourth point extend to any computable field $\mathcal{K}$.
	In consequence,
	Theorem~11.8 of \cite{toolbox} holds for any \aRA{$\mathcal{K}$}.
	Since the polynomial ring $\mathcal{K}[X]$ is a subring of a computable field $\mathcal{K}(X)$ (rational functions over $\mathcal{K}$),
	it holds for \aRA{$\mathcal{K}[X]$} as well.
	We therefore state the following theorem.
	\begin{theorem}\label{thmPolyRA}
		Let $\boldsymbol{Q}[X]=(\Q[X],+,\times)$.
		Functionality of \aRA{ $\boldsymbol{Q}[X]$} and equivalence of functional \aRA{ $\boldsymbol{Q}[X]$} are decidable.
	\end{theorem}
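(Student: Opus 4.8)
The plan is to establish decidability by giving, for each of the two problems, a semi-decision procedure for its positive instances and one for its negative instances, and then dovetailing them. The negative side is easy and essentially independent of the underlying ring: non-functionality of an \aRA{$\boldsymbol{Q}[X]$} $M$ is witnessed by an input tree $t$ and two accepting runs of $M$ on $t$ whose outputs $p,q\in\Q[X]$ satisfy $p\neq q$, and non-equivalence of two functional \aRA{$\boldsymbol{Q}[X]$} by a single tree on which they produce distinct polynomials. As polynomials in $\Q[X]$ are finite objects with decidable equality, one enumerates candidate trees and pairs of runs and checks the witness condition; so non-functionality, and non-equivalence of functional machines, are semi-decidable.

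For the positive side I would run the four-step ``Hilbert Method'' template recalled above, but instantiated over the computable field $\mathcal{K}=\Q(X)$ of rational functions rather than over $\Q$. First, by the product constructions behind Theorem~11.8 of~\cite{toolbox} together with the functionality remark following it -- both purely syntactic and ring-agnostic -- functionality, respectively functional equivalence, reduces to \emph{zeroness}: build a machine $M''$ that runs two copies of the input machine(s) in parallel and outputs the difference of their outputs, so that the property holds iff $\sem{M''}\subseteq\lbrace 0\rbrace$. It therefore suffices to semi-decide zeroness of \aRA{$\boldsymbol{Q}[X]$}. Now $\Q[X]$ is a subring of $\mathcal{K}=\Q(X)$, every polynomial function of $(\Q[X],+,\times)$ is a polynomial function of $(\Q(X),+,\times)$, and the output value of a run of a machine -- hence its zeroness -- does not change when the ambient structure is enlarged; so each \aRA{$\boldsymbol{Q}[X]$} is an \aRA{$\mathcal{K}$} with the same zeroness instance. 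Finally, steps three and four apply to $\mathcal{K}$: zeroness of an \aRA{$\mathcal{K}$} reduces, through the polynomial-grammar formulation and Lemma~11.5 of~\cite{toolbox} (with a fresh variable set $Y$), to an inclusion problem for finitely generated ideals in $\overline{\mathcal{K}}[Y]$, and that problem is semi-decidable because Hilbert's Basis Theorem holds over every field, Groebner bases are computable over every computable field, and the algebraic closure of a computable field can be presented effectively enough to enumerate finitely generated ideals and to test inclusions. Dovetailing this with the negative-side procedure yields the theorem.

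The step that will need the most care is checking that the algebraic machinery of~\cite{toolbox} is genuinely field-agnostic. One must verify that the passage from a family of subsets of the base field satisfying the inclusion equations to a family of ideals over the algebraic closure (Lemma~11.5), and the effectivity of the ideal-inclusion test, nowhere quietly use that the base field is $\Q$ -- for instance by relying on a fixed finite encoding of algebraic numbers -- but only that it is a computable field. Since $\Q(X)$ is computable and all the algebra involved is constructive over computable fields, this goes through, and everything else is the standard template applied verbatim; the remaining work is bookkeeping to keep the variable set $X$ of $\Q[X]$ separate from the grammar and ideal variables $Y$.
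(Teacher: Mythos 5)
Your proposal is correct and follows essentially the same route as the paper: semi-decide the negative side by enumerating runs, and semi-decide the positive side by reducing functionality/equivalence to zeroness via a product construction and then invoking the ``Hilbert Method'' of~\cite{toolbox} over the computable field $\Q(X)$, of which $\Q[X]$ is a subring. The one point you flag as needing care --- that the machinery of~\cite{toolbox} is genuinely field-agnostic --- is exactly the point the paper itself asserts without further verification.
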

	
	The result of Theorem~\ref{thmPolyRA} can be extended to other algebras using simulations from algebra to algebra.
	Indeed, if $\aalg\simulatedBy\balg$, then any \aRA{\aalg} can be simulated by a \aRA{\balg},
	and problems of functionality and equivalence reduce from \aRA{\aalg} to \aRA{\balg}.
	
	\begin{corollary}\label{corRA-Equivalence}
		Let $\aalg$ be an algebra. If $\aalg\simulatedBy \boldsymbol{Q}[X]$, then
		functionality of \aRA{\aalg} and equivalence of functional \aRA{\aalg} are decidable.
	\end{corollary}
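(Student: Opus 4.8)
The plan is to make rigorous the sketch already given in the discussion of the ``Hilbert Method'': promote the simulation $\alpha$ to a reduction at the level of register automata, and then apply Theorem~\ref{thmPolyRA}. Fix a simulation $\alpha\colon\aalgset\to(\Q[X])^d$ witnessing $\aalg\simulatedBy\boldsymbol{Q}[X]$, and let $M=(\SigIn,n,Q,\rul,\outfun)$ be a \aRA\aalg. I will build a \aRA{$\boldsymbol{Q}[X]$} $M'$ with the same input signature $\SigIn$ and state set $Q$ but with $n\cdot d$ registers, the idea being that the $i$-th $\aalgset$-register of $M$ is encoded by the block of $d$ many $\Q[X]$-registers holding the coordinates of $\alpha$ applied to it. Writing $\widehat\alpha$ for the map that applies $\alpha$ blockwise (so $\widehat\alpha\colon\aalgset^n\to(\Q[X])^{n\cdot d}$, and similarly on larger products), the intended correspondence is that a configuration $(q,\ol r)$ of $M$ matches the configuration $(q,\widehat\alpha(\ol r))$ of $M'$.

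The one substantive point is that polynomial functions of $\aalg$ can be transported along $\alpha$: for every polynomial function $f\colon\aalgset^\ell\to\aalgset^{\ell'}$ of $\aalg$ there is a polynomial function $f'\colon(\Q[X])^{\ell\cdot d}\to(\Q[X])^{\ell'\cdot d}$ of $\boldsymbol{Q}[X]$ with $\widehat\alpha\circ f=f'\circ\widehat\alpha$. This is proved by induction on a term defining $f$: a variable translates to the corresponding block of $d$ variables; a constant $a\in\aalgset$ translates to the constant tuple $\alpha(a)\in(\Q[X])^d$ (which is legitimate, since the constants occurring in a polynomial function of $\boldsymbol{Q}[X]$ may be arbitrary elements of $\Q[X]$); and an application of an operation $\opalg$ of $\aalg$ translates, by the defining property of a simulation in Definition~\ref{defSimulation}, to an application of the associated polynomial function of $\boldsymbol{Q}[X]$ composed with the already-translated subterms. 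The commutation $\widehat\alpha\circ f=f'\circ\widehat\alpha$ is then immediate from the induction hypothesis. Applying this to every rule function and output function of $M$ defines $M'$: replace each rule $a(\enum q1k)\rightarrow q,f$ of $\rul$ by $a(\enum q1k)\rightarrow q,f'$, and declare the $\boldsymbol{Q}[X]$-valued ``output'' at a state $q$ with $\outfun(q)=f_q$ to be $f_q'$. Since Definition~\ref{defRA} allows a single $\Q[X]$-valued output rather than a $d$-tuple, one actually works with the $d$ automata $M'_1,\dots,M'_d$, where $M'_j$ is $M'$ with each output post-composed with the $j$-th coordinate projection (keeping the same domain of definition). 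When $\aalg$ is multi-sorted the construction is unchanged, one only tracks the type of each register block; the transport of polynomial functions respects types because $\alpha$ is defined type-wise.

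A routine induction on the input tree $t$ then shows that every register valuation reachable by $M'$ in a state $q$ on $t$ lies in the image of $\widehat\alpha$ -- the base case holds because a leaf rule of $M$ produces a constant tuple over $\aalgset$, whose translate is exactly its $\widehat\alpha$-image -- and, more precisely, that $\widehat\alpha$ maps the set of $\ol r$ with $t\rightarrow(q,\ol r)$ in $M$ onto the set of $\ol s$ with $t\rightarrow(q,\ol s)$ in $M'$. Together with $\widehat\alpha\circ f_q=f_q'\circ\widehat\alpha$ this yields $\sem{M'_j}(t)=\{(\alpha(a))_j : a\in\sem M(t)\}$ for every input $t$ and every $j\in\{1,\dots,d\}$. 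Now use that $\alpha$ is injective: elements $a,b\in\sem M(t)$ are distinct iff $(\alpha(a))_j\neq(\alpha(b))_j$ for some $j$, so $M$ is functional iff all of $M'_1,\dots,M'_d$ are functional; and for functional \aRA\aalg $M_1,M_2$ each $(M_i)'_j$ is functional and $\sem{M_1}=\sem{M_2}$ iff $\sem{(M_1)'_j}=\sem{(M_2)'_j}$ for all $j$, i.e. $M_1,M_2$ are equivalent iff $(M_1)'_j,(M_2)'_j$ are equivalent for each $j$. All the resulting questions are about functionality or equivalence of functional register automata with values in $\boldsymbol{Q}[X]$, hence decidable by Theorem~\ref{thmPolyRA}.

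The construction is mostly bookkeeping; the two steps that need genuine care are the inductive transport of polynomial functions along $\alpha$ (handling composition correctly, and turning constants of $\aalg$ into legitimate constants of $\boldsymbol{Q}[X]$) and the use of injectivity of the simulation $\alpha$, which is precisely what makes the backward directions of the two reductions -- ``$M'_1,\dots,M'_d$ all functional $\Rightarrow$ $M$ functional'', and its analogue for equivalence -- valid. Beyond these I expect no obstacle.
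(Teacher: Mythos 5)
Your proposal is correct and follows essentially the same route as the paper, which only sketches this reduction in the ``Hilbert Method'' discussion (simulate the \aRA{\aalg} by a \aRA{$\boldsymbol{Q}[X]$} via $\alpha$ and invoke Theorem~\ref{thmPolyRA}); you merely fill in the details the paper leaves implicit, such as the blockwise transport of polynomial functions, the splitting of the $d$-tuple output into $d$ automata, and the explicit use of injectivity of $\alpha$ for the backward directions.
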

	\section{Unordered forests are simulated by polynomials}\label{secForests}
	In this section we will show that the unordered tree forests (and more generally -- the unordered forest algebra~\cite{BojanczykWalukiewicz08} that contains both forests and contexts with one hole)
	can be simulated in the sense of Definition~\ref{defSimulation} by polynomials with rational coefficients over a variable $x$ (noted $\Q[x]$) with the operations $+,\times$.
	This, combined with Corollary~\ref{corRA-Equivalence}, implies the decidability of functionality and equivalence for a class of \aRA{Forests}.
	We then prove that this class can express all MSO-transformations on unordered forests.
	
	An \emph{unordered tree} on a finite signature $\SigIn$ is an unranked tree (i.e.~every node can have arbitrarily many children), but the children of a node form an unordered multiset, rather than an ordered list.
	For example, the following figure displays two representations of the same unordered tree.
	An \emph{unordered forest} is a multiset of unordered trees.
	
	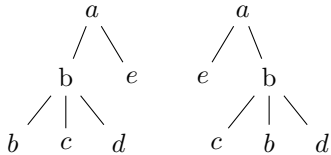
\begin{figure}[h]
		\begin{center}
			\begin{tikzpicture}[sibling distance=30, level distance=25]
			\node at (0,0) {$a$}
			child[sibling distance=20] {node {b}
				child {node {$b$}}
				child {node {$c$}}
				child {node {$d$}}
			}
			child {node {$e$}}
			;
			\node at (2,0) {$a$}
			child {node {$e$}}
			child[sibling distance=20] {node {b}
				child {node {$c$}}
				child {node {$b$}}
				child {node {$d$}}
			}
			;
			
			\end{tikzpicture}
		\end{center}
		\caption{Two representations of the same unordered tree}\label{figUnordered}
	\end{figure}
	
	Unordered forests can thus be defined as an algebra $\uf=(\setUF, +, \left\{\root_a\right\})$:
	\begin{enumerate}
		\item \setUF is the set of unordered forests, including $\emptyset$ -- the empty forest;
		\item the operations are:
		\begin{itemize}
			\item binary operation $+$ is the multiset addition,
			\item for each letter $a\in\SigIn$, unary operation $\root_a$: if $h=t_1+\dots+t_n$, then $\root_a(h)=a(\enum t1n)$.
		\end{itemize}
	\end{enumerate}
	
	In the rest of this paper, we will reason with a unary signature (and thus a unique \root operation). This is done without loss of generality, as unordered forests on a finite signature $\SigIn=\lbrace\enum a1n\rbrace$ can easily be encoded by forests on a unary signature. To express it as a polynomial simulation, we can say that $\alpha(\emptyset)=\emptyset$ , and that for all $1\leqslant i \leqslant n$
	$$\alpha(\root_{a_i}(h))=\root^i(\root(\emptyset)+\root(\alpha(h)))$$
	
	\subsection{Encoding forests into polynomials}
	
	This subsection's aim is to prove the following result:
	\begin{proposition}\label{propSimForests}
		$(\setUF,+,\root)$ is simulated by $(\Q[x],+,\times)$.
	\end{proposition}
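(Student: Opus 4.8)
The plan is to construct explicitly a simulation $\alpha\colon\setUF\to\Q[x]^n$ for some fixed $n$, and then verify the two defining conditions of Definition~\ref{defSimulation}: that $+$ and $\root$ each commute with $\alpha$ up to a polynomial function on the target. The key design decision is what invariant of an unordered forest to encode. Since the children of a node form a \emph{multiset}, the natural idea is to represent a forest by a symmetric function of the encodings of its constituent trees: if $h = t_1 + \dots + t_k$ and each $t_i$ is encoded by a number $\beta(t_i)\in\Q$ (or a tuple), then $h$ should be encoded by something like $\prod_i (x - \beta(t_i))$ — a monic polynomial in $x$ whose multiset of roots is exactly the multiset of tree-codes. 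Multiset addition of forests then becomes multiplication of such polynomials, which is manifestly a polynomial operation on $\Q[x]$. The remaining issue is encoding $\root$: given the polynomial $P_h(x) = \prod_i (x-\beta(t_i))$ representing a forest $h$, we must produce a single scalar $\beta(\root(h))$ — the code of the tree $\root(h)$ — as a polynomial function of the coefficients of $P_h$. For this we want an injective map from monic polynomials (equivalently, from multisets of reals/rationals, i.e. from the coefficient tuple) to a scalar; but a polynomial of unbounded degree has unboundedly many coefficients, so a single fixed scalar cannot be obtained by a fixed polynomial function.

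This is the main obstacle, and it forces the encoding to carry \emph{more} than one coordinate: I expect $\alpha$ to land in $\Q[x]^2$ (or a small fixed power). The trick is to also track the \emph{degree} (number of trees in the forest) in a separate coordinate, and to encode a forest not by the raw coefficient list but by evaluating its root-polynomial at a point determined by the degree, or by a Gödel-style pairing that keeps everything inside $\Q[x]$. Concretely, one workable scheme: let the first coordinate of $\alpha(h)$ be the monic polynomial $P_h(x)=\prod_{i}\bigl(x-\alpha_0(t_i)\bigr)$ where $\alpha_0(t_i)$ is the \emph{first} coordinate of $\alpha(t_i)$ re-read as an element of $\Q$ via substitution $x:=0$ or some fixed rational — and the second coordinate records $|h|$ as the constant polynomial $k$. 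Then $\root(h)$ is the tree whose code should be injective in $(P_h, k)$; since $P_h$ has degree $k$, the pair $(P_h,k)$ determines $P_h$'s coefficient vector of known length, and a \emph{single} polynomial — e.g. $\root$ maps to $\bigl(x - c(P_h,k),\ 1\bigr)$ where $c$ packs the coefficients — can be written uniformly provided we use $x$ itself to absorb the variable-length data. I would make $c$ the evaluation $P_h(N)$ for a large fixed integer $N$, or better, keep $P_h$ as the code of the resulting singleton forest directly: that is, set $\alpha(\root(h)) = \bigl(x - P_h(r),\, 1\bigr)$ for a fixed transcendental-enough rational point $r$, and argue injectivity from the fact that a monic polynomial is determined by its value at enough points — here one point suffices if we instead feed all of $P_h$ through.

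The cleanest route, and the one I would ultimately pursue, is: pick the simulation so that the \emph{tree}-code already lives in $\Q$ and the \emph{forest}-code lives in $\Q[x]$, exploiting that $\Q[x]$ has room for unbounded data while $\Q$ does not, and that the only place we must collapse unbounded data to bounded is at $\root$, where injectivity on forests with a \emph{fixed} number of children is enough because the polynomial function witnessing the simulation may depend on the arity of the operation — but $\root$ is a single unary operation, so this does not directly help. Hence I will instead let \emph{both} tree- and forest-codes live in $\Q[x]$, with forests encoded by $\prod_i(x - \text{(tree-code of }t_i\text{ evaluated at }x{:=}0))$, and verify that $\root$ becomes: read off the finitely many coefficients — no, still unbounded. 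The honest resolution, which I believe is what the paper does: encode a forest $t_1+\dots+t_k$ as the \emph{polynomial} $c_1 + c_2 x + \dots + c_k x^{k-1} + x^k$ where $c_i$ are elementary symmetric functions of the tree-codes $b_j\in\Q$, i.e. as an element of $\Q[x]$, and encode a \emph{tree} $\root(h)$ by a \emph{fresh variable-free rational} obtained as $h$'s polynomial evaluated at a fixed point $x := q_0$ that is chosen (by a pigeonhole / algebraic independence argument over the countable set of trees built so far) to keep $\beta$ injective; since at each inductive level only countably many trees exist and $\Q$ is infinite, such a choice is possible, and crucially $\beta(\root(h)) = P_h(q_0)$ is a \emph{polynomial} (in fact linear) function of the coefficients of $P_h$, hence a polynomial function of $\alpha(h)$ in the sense required. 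I expect the write-up to therefore: (1) define $\alpha$ by mutual induction on trees and forests with the product/elementary-symmetric encoding; (2) check $\alpha(h_1+h_2)=\alpha(h_1)\cdot\alpha(h_2)$, so $+$ simulates as $\times$; (3) check $\alpha(\root(h))$ is a fixed polynomial function of $\alpha(h)$ — the content-free "evaluate at $q_0$" step — which is the heart of the argument; (4) verify injectivity of $\alpha$, which by the symmetric-polynomial encoding reduces to injectivity of $\beta$ on trees plus the fact that a multiset of rationals is recovered from its elementary symmetric functions. The main obstacle, as flagged, is step (3): making the collapse at $\root$ both injective and expressible by a single fixed polynomial, and the key insight is that "evaluation at a fixed point" is linear in the coefficients and that injectivity can be salvaged because $\Q$ is infinite.
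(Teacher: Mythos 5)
Your overall shape is right in one respect---you encode a forest as the product of the codes of its trees, so that multiset union becomes multiplication in $\Q[x]$---but the way you resolve the encoding of $\root$ contains a fatal flaw. You propose to make tree-codes scalars in $\Q$ and to obtain $\beta(\root(h))$ by evaluating the forest-polynomial $P_h$ at a fixed point $q_0$. The problem is that evaluation at a point is \emph{not} a polynomial function of the algebra $(\Q[x],+,\times)$ in the sense of Definition~\ref{defSimulation}: a polynomial function there is a term built from $+$, $\times$, the argument, and constants of $\Q[x]$, hence has the form $P\mapsto\sum_i c_i P^i$ with $c_i\in\Q[x]$, and no such term sends every monic $P$ to its value at $q_0$ (check it already fails on $P=x$ versus $P=x^2$). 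Moreover, even as a set map, $P\mapsto P(q_0)$ collapses infinitely many distinct monic polynomials of different degrees to the same rational, and no choice of $q_0$ ``level by level'' can repair this, because the simulation must be one fixed map with one fixed witnessing polynomial function, uniform over all inputs. Your own text circles this obstacle several times (``no, still unbounded'') without actually closing it; the pigeonhole argument you invoke does not produce a polynomial function of the target algebra.

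The missing idea is to keep tree-codes as \emph{polynomials} rather than scalars and to get injectivity from unique factorization instead of from root-multisets of linear factors. The paper sets $\phi(\emptyset)=1$, $\phi(h+h')=\phi(h)\times\phi(h')$, and $\phi(\root(h))=2+x\times\phi(h)$; the last is a legitimate polynomial function of $(\Q[x],+,\times)$ (multiply by the constant $x$, add the constant $2$), so no collapse to $\Q$ is ever needed. An induction shows every $\phi(h)$ is monic with all non-leading coefficients even and every $\phi(\root(h))$ has constant term $\equiv 2 \pmod 4$, so by Eisenstein's criterion at the prime $2$ each tree-code is irreducible in $\Q[x]$. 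A forest-code is then a product of irreducible monics, and unique factorization in $\Q[x]$ recovers the multiset of tree-codes, giving injectivity by structural induction. This is exactly the role your elementary-symmetric-function encoding was meant to play, but with irreducible polynomial factors in place of linear ones, which is what lets $\root$ be simulated without leaving the algebra.
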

	To this end we construct an injective homomorphism $\phi:\setUF\rightarrow\Rat[x]$.
	This $\phi$ associates injectively to each forest a rational polynomial $p$.
	It is important to check that two identical forests with different representations (as in Figure~\ref{figUnordered}) will not obtain different value by $\phi$.
	Furthermore, the operations $+, \root$ must be encoded as $\psi_+,\psi_r$, two polynomial functions in $(\Rat[x],+,\times)$,
	such that $\phi(h+h')=\psi_+(\phi(h),\phi(h'))$ and $\phi(\root(h))=\psi_r(\phi(h))$.
	
	Note that the term ``polynomial'' suffers here from semantic overload.
	We will take care to differentiate, on one hand, rational polynomials (i.e.~the elements of $\Rat[x]$, e.g. $2x-7$), denoted by variants on letters $p,q$,
	and on the other hand, polynomial functions on the algebra $(\Q[x],+,\times)$ (e.g. $\psi: (p,q)\mapsto q\times q + 2 p$), denoted by variants on the letter $\psi$.
	
	Since $+$ in \uf is both associative and commutative, we choose $\psi_+$ to be multiplication between rational polynomials: $\psi_+:(p,q)\mapsto p \times q$.
	This leaves \root to encode.
	To ensure that $\phi$ is injective, we would like to
	pick $\psi_r$ so that $\phi$ sends all $\root(h)$ to pairwise different irreducible polynomials.
	This is done by picking $\psi_r:p \mapsto 2+x\times p$ 
	and using the 
	\iflong
	following
	\else
	Eisenstein's
	\fi 
	criterion 
	\iflong
	\else
	with prime number 2: if a monic polynomial has all its nonleading coefficients divisible by 2, and the constant coefficient not divisible by 4, then this polynomial is irreducible over \Q.
	\fi
	\iflong
	for irreducibility in $\Rat[x]$.
	\else \fi
	\iflong
	\begin{lemma}[Eisenstein's Criterion]\label{lemEisenstein}
		Let $f(x)=x_n+\dots+a_1x+a_0$, be a monic polynomial with integer coefficients. Let $p$ be any prime number. If for all $0\leqslant i \leqslant n-1$, $p\mid a_i$, and $p^2 \nmid a_0$, then $f$ is irreducible in $\Rat[x]$.
	\end{lemma}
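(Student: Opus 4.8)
The plan is to prove this by the standard route for Eisenstein's criterion: reduce a hypothetical factorization over $\Rat$ to one over $\mathbb{Z}$ via Gauss's Lemma, then reduce coefficients modulo $p$ and exploit unique factorization in $\mathbb{F}_p[x]$ to reach a contradiction with the condition $p^2 \nmid a_0$.

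First I would argue by contradiction: suppose $f = gh$ in $\Rat[x]$ with $\deg g = j \geqslant 1$ and $\deg h = k \geqslant 1$ and $j + k = n$. Since $f$ is monic with integer coefficients it is primitive, so Gauss's Lemma lets me clear denominators and rescale, replacing $g$ and $h$ by associates lying in $\mathbb{Z}[x]$; as the product of their leading coefficients then equals $1$, I may further normalise both $g$ and $h$ to be monic in $\mathbb{Z}[x]$ without changing their degrees.

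Next I would reduce the identity $f = gh$ modulo $p$, writing $\overline{(\cdot)}$ for coefficientwise reduction into $\mathbb{F}_p[x]$. Since $p$ divides every non-leading coefficient of the monic polynomial $f$, we get $\overline f = x^n$, hence $\overline g\,\overline h = x^n$ in $\mathbb{F}_p[x]$. Because $g$ and $h$ are monic, reduction preserves their degrees, so $\deg \overline g = j$ and $\deg \overline h = k$. Unique factorization in $\mathbb{F}_p[x]$ --- where the only monic irreducible dividing $x^n$ is $x$ --- then forces $\overline g = x^j$ and $\overline h = x^k$, and since $j, k \geqslant 1$, both $\overline g$ and $\overline h$ have zero constant term. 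Translating back to $\mathbb{Z}[x]$, this says $p$ divides the constant term of $g$ and the constant term of $h$, so $p^2$ divides their product $g(0)h(0) = f(0) = a_0$, contradicting $p^2 \nmid a_0$. Therefore no such factorization exists and $f$ is irreducible in $\Rat[x]$.

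I do not expect a genuine obstacle, as this is a textbook argument; the only step that needs to be stated with care is the appeal to Gauss's Lemma, namely that a factorization over $\Rat$ of a primitive integer polynomial can be realised over $\mathbb{Z}$ and, using monicity, taken with monic factors --- which is exactly what guarantees that the degrees of the factors survive reduction modulo $p$.
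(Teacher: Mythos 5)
Your proof is correct: it is the standard argument via Gauss's Lemma and reduction modulo $p$ in $\mathbb{F}_p[x]$, and every step (normalising the integer factors to be monic, degree preservation under reduction, and extracting $p^2\mid a_0$ from the constant terms) is sound. The paper itself gives no proof of this lemma --- it quotes Eisenstein's criterion as a classical fact --- so there is nothing to compare against beyond noting that yours is the textbook proof one would cite.
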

	\else
	\fi
	From there we 
	define $\phi$ inductively: $\phi(\emptyset)=1$, $\phi(h+h')=\phi(h)\times\phi(h')$, and $\phi(\root(h))=2+x\times\phi(h)$.
	It is clear that $\phi$ respects the condition of polynomial simulation that any operation of \uf must be encoded as polynomial operation in $(\Rat[x],+,\times)$.
	\iflong
	We must now prove that $\phi$ is injective.
	
	\begin{lemma}\label{lemPhiInj}
		$\phi$ is injective.
	\end{lemma}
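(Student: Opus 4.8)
The plan is to deduce injectivity of $\phi$ from unique factorization in $\Rat[x]$, after checking that $\phi$ maps every individual tree to a monic irreducible polynomial. I would first record, by a routine induction on the structure of the forest, two invariants of $\phi(h)$. First, $\phi(h)$ is always a \emph{monic} polynomial with integer coefficients: this holds for $\phi(\emptyset)=1$, it is preserved by $\psi_+\colon(p,q)\mapsto p\times q$, and it is preserved by $\psi_r\colon p\mapsto 2+x\times p$, which raises the degree by one while keeping the leading coefficient equal to $1$. Second, reducing coefficients modulo $2$ one has $\phi(h)\equiv x^{\deg\phi(h)}$ in $\mathbb{F}_2[x]$: this again holds for $\phi(\emptyset)=1=x^0$, is stable under products, and is stable under $\psi_r$ since $2+x\times\phi(h)\equiv x\cdot x^{\deg\phi(h)}=x^{\deg\phi(h)+1}\pmod 2$.

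From the second invariant I get irreducibility of $\phi$ on trees. For a tree $t=\root(h)$ the polynomial $\phi(t)=2+x\times\phi(h)$ is monic with integer coefficients; its constant coefficient is $2$, hence not divisible by $4$; and each of its non-leading coefficients is even, since apart from the constant term $2$ these are exactly the non-leading coefficients of $\phi(h)$, all even by the second invariant. Eisenstein's criterion (Lemma~\ref{lemEisenstein}) with the prime $2$ then shows $\phi(t)$ is irreducible in $\Rat[x]$. I expect this to be the crux of the argument: everything else is bookkeeping, whereas making Eisenstein applicable is precisely what forces the ``$+2$'' in $\psi_r$ and is what the modulo-$2$ invariant is tailored to supply.

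To conclude, I would prove $\phi(h)=\phi(h')\implies h=h'$ by induction on the number of nodes of $h$. Writing $h$ and $h'$ as multisets of trees, $h=t_1+\dots+t_n$ and $h'=t'_1+\dots+t'_m$, we have $\phi(h)=\prod_i\phi(t_i)$ and $\phi(h')=\prod_j\phi(t'_j)$, which are factorizations into monic irreducibles by the previous paragraph. Since $\Rat[x]$ is a unique factorization domain and all factors are monic, $\phi(h)=\phi(h')$ forces these two multisets of irreducible factors to coincide; in particular $n=m$, and after reindexing $\phi(t_i)=\phi(t'_i)$ for every $i$ (if $h=\emptyset$ both products are empty, so $\phi(h')=1$ forces $h'=\emptyset$, which is the base case). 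Writing $t_i=\root(g_i)$ and $t'_i=\root(g'_i)$, the equality $2+x\times\phi(g_i)=2+x\times\phi(g'_i)$ gives $\phi(g_i)=\phi(g'_i)$; as $g_i$ has strictly fewer nodes than $h$, the induction hypothesis yields $g_i=g'_i$, hence $t_i=t'_i$, and therefore $h=h'$.
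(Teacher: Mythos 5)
Your proof is correct and follows essentially the same route as the paper's: establish by induction that every $\phi(h)$ is monic with all non-leading coefficients even, apply Eisenstein's criterion with $p=2$ to conclude that each $\phi(\root(h))$ is irreducible, and then use uniqueness of factorization into monic irreducibles in $\Rat[x]$ together with the cancellation $2+x\phi(g)=2+x\phi(g')\Rightarrow\phi(g)=\phi(g')$ to drive a structural induction. The only differences are cosmetic (phrasing the parity invariant as a reduction in $\mathbb{F}_2[x]$, and inducting on node count rather than structurally).
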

	\begin{proof}
		First, by induction, all $\phi(h)$ are monic polynomials $x^n+\dots+a_1x+a_0$ such that for all $0\leqslant i \leqslant n-1$, $2\mid a_i$.
		This means that by Lemma~\ref{lemEisenstein}, all $\phi(\root(h))$ are irreducible.
		We now prove injectivity by structural induction on forests. 
		
		In the basic case, $h=\emptyset \iff \phi(h)=1$. To prepare for an induction step, observe that
		\begin{equation}\label{eq::phiroot}
		\phi(\root(h))=\phi(\root(h')) \Leftrightarrow 2+x\times\phi(h)=2+x\times\phi(h') \Leftrightarrow \phi(h)=\phi(h').
		\end{equation}
		
		Let now $h$ be nonempty forest with decomposition $h=\sum_{i=1}^{n} t_i$, $n\geq 1$. If $h'=\sum_{i=1}^{n'}t'_{i}$ and $\phi(h)=\phi(h')$ then $\Pi_{i=1}^n\phi(t_i)=\Pi_{i=1}^{n'}\phi(t_{i}')$.
		Since these are the unique decompositions as irreducible monics of $\phi(h)$ and $\phi(h')$, we get multisets equality $\{\phi(t_i)\}_i=\{\phi(t_i')\}_i$.
		By \eqref{eq::phiroot} and induction, $t_1+\ldots+ t_n=t'_1+\ldots+ t'_{n'}$, i.e. $h=h'$.
	\end{proof}
	\else
	\fi
	
	%
	This leads directly to the proof of Proposition~\ref{propSimForests}: $\phi$ is a simulation from \uf to $\Q[x]$ as defined in Definition~\ref{defSimulation}.
	It is injective, the operation $+$ is encoded by the polynomial function $\psi_+:(p,q)\mapsto p \times q$,
	and the operation $\root$ is encoded  by the polynomial function $\psi_r:p \mapsto 2+x\times p$.
	\subsection{Extension to contexts}\label{subsecContext}
	The combination of Corollary~\ref{corRA-Equivalence} and Proposition~\ref{propSimForests} gives decidability results on the class of \aRA{\uf}.
	The transducers of this class read a ranked input, and manipulate registers with values in \setUF.
	As an example, an \aRA{\uf} can read a binary input, and output the unordered forests that it encodes in a ``First Child Next Sibling'' manner,
	that is to say the left child in the input corresponds to the child in the output, and the right child in the input corresponds to the brother in the output.
	Note that this is an adaptation of classical FCNS encoding of unranked \textbf{ordered} trees in binary trees, but where the order is forgotten.
	
	\begin{figure}[h]
		\begin{center}
			\begin{tikzpicture}[sibling distance=30, level distance=20]
			\node at (0,0) {$a$}
			child[sibling distance=30] {node {b}
				child[sibling distance=50] {node {$c$}
					child[sibling distance=30] {node {$\bot$}}
					child[sibling distance=30] {node {$d$}
						child[sibling distance=30] {node {$\bot$}}
						child[sibling distance=30] {node {$\bot$}}
					}
				}
				child[sibling distance=50] {node {$e$}
					child[sibling distance=30] {node {$\bot$}}
					child[sibling distance=30] {node {$f$}
						child[sibling distance=30] {node {$\bot$}}
						child[sibling distance=30] {node {$\bot$}}
					}
				}
			}
			child {node {$\bot$}}
			;
			\node at (6,0) {$a$}
			child {node {$b$}
				child {node {$c$}}
				child {node {$d$}}
			}
			child {node {$e$}}
			child {node {$f$}}
			;
			
			\draw [->,thick] (2,-1) -- (4,-1);
			
			\end{tikzpicture}
		\end{center}
		\caption{``FCNS'' decoding}\label{figFCNS}
	\end{figure}
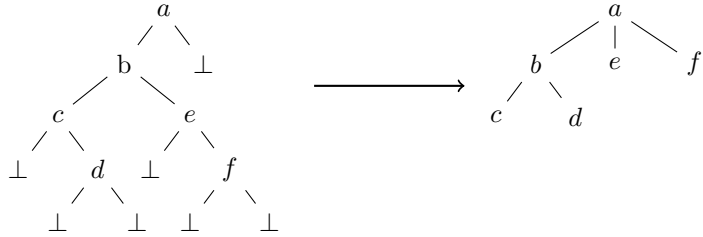
	This can be described by a one-state one-register \aRA{\uf} that uses rules of form
	$$(a,q,q)\rightarrow (q,(x,y)\mapsto \root_a(x)+y).$$
	
	However, \aRA{\uf} have their restriction: since $\root$ and $+$ are the only two operations allowed,
	registers can only store subtrees to be placed at the bottom of the output.
	This leaves the class without the ability to combine subtrees of its output as freely as the MSO logic does.
	As an example, it is impossible to create an \aRA{\uf} that,
	if given an input $f(u,v)$ where $u$ and $v$ are two unary subtrees,
	outputs the subtree $u$ above the subtree $v$ as shown in Figure~\ref{figConcat}.
	
	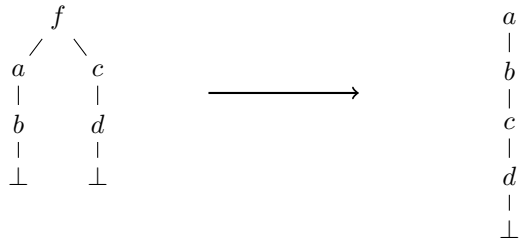
\begin{figure}[h]
		\begin{center}
			\begin{tikzpicture}[sibling distance=30, level distance=20]
			\node at (0,0) {$f$}
			child {node {$a$}
				child {node {$b$}
					child {node {$\bot$}}
				}
			}
			child {node {$c$}
				child {node {$d$}
					child {node {$\bot$}}
				}
			}
			;
			\node at (6,0) {$a$}
			child {node {$b$}
				child {node {$c$}
					child {node {$d$}
						child {node {$\bot$}}
					}
				}
			}
			;
			
			\draw [->,thick] (2,-1) -- (4,-1);
			
			\end{tikzpicture}
		\end{center}
		\caption{Subtree concatenation}\label{figConcat}
	\end{figure}
	
	To get a more general class of register automata, that can perform such superpositions,
	we need to allow registers to store contexts, rather than forests.
	While the use of the Hilbert Methods for algebras of general contexts remains a difficult and interesting open problem, we will show that forest contexts with at most one hole are simulated by polynomials of $\mathbb{Z}[x]$.%
	
	We use the unordered version of 2-sorted Forest Algebra~\cite{BojanczykWalukiewicz08}, consisting of unordered forests of trees and contexts with at most one hole. Since the previous subsection deals with an algebra of forests,
	to avoid confusion, we will call this the \emph{Unordered Context and Forest} algebra (noted \ucf).
	Using the definition of composition algebras, \ucf is a subset of $\uf[\circ]^{\mathsf{subs}}$,
	where we impose that the replacable variable $\circ$ occurs at most once.
	\begin{center}
		\newcommand{\treeA}[1]{
			node[Node] {}
			child {
				node[Node] {}
				child {
					#1
				}
			}
			child {
				node[Node] {}
				child {
					node[Node] {}
				}
				child {
					node[Node] {}
				}
			}
			child {
				node[Node] {}
			}
		}%
		\newcommand{\treeB}[1]{
			node[Node] (anchor) {}
			child {
				node[Node] {}
			}
			child {
				#1
			}
		}%
		$
		\left(
		\rule{0cm}{1.2cm}
		\ 
		\begin{tikzpicture}[baseline=-0.9cm,sibling distance=24, level distance=20]
		\draw \treeA{node[Hole] {}};
		\end{tikzpicture}
		\ 
		\right)
		\left[
		\ 
		\rule{0cm}{0.7cm}
		\tikz[baseline=-0.5ex]{ \node [Hole] {};} 
		:=
		\begin{tikzpicture}[baseline=-0.4cm,sibling distance=24, level distance=20]
		\draw \treeB{node[Hole] {}};
		\draw[dashed] ($(anchor.north) + (0,0.25)$) -- ++(0.8,-1.35) -- ++(-1.6, 0) -- cycle;
		\end{tikzpicture}
		\ 
		\right]
		\ 
		=
		\begin{tikzpicture}[baseline=-1.1cm,sibling distance=24, level distance=20]
		\draw \treeA{
			\treeB{node[Hole] {}}
		};
		\draw[dashed] ($(anchor.north) + (0,0.3)$) -- ++(0.8,-1.35) -- ++(-1.6, 0) -- cycle;
		\end{tikzpicture}
		$
	\end{center}
	On this algebra, we will show the following results:
	
	\begin{theorem}\label{thmUCFtoPol}
		$\ucf$ is simulated by $(\Q[x], +, \times)$.
	\end{theorem}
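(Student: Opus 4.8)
The plan is to build on the polynomial encoding $\phi$ of Proposition~\ref{propSimForests}, together with the observation that a forest context with a single occurrence of $\circ$, read through $\phi$, acts on rational polynomials as an \emph{affine} map $p\mapsto Ap+B$. I will define a simulation $\alpha$ of the two-sorted algebra $\ucf$ in $(\Q[x],+,\times)$ that uses one coordinate on the sort of forests and two coordinates on the sort of contexts: on a forest $h$ set $\alpha(h)=\phi(h)$, and on a context $c$ (in which $\circ$ occurs at most once) set $\alpha(c)=(A_c,B_c)\in(\Q[x])^2$, where $A_c,B_c$ are chosen so that
\[
\phi\big(c[\circ:=h]\big)\;=\;A_c\cdot\phi(h)+B_c\qquad\text{for every forest }h.
\]

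The first step is to show that such $A_c,B_c$ exist and depend only on the element $c$. Existence follows by induction on a term for $c$, where subterms not containing $\circ$ are forests: if $\circ$ does not occur, $c$ is a forest $g$ and $(A_c,B_c)=(0,\phi(g))$; if $c=\circ$, then $(A_c,B_c)=(1,0)$; if $c=c'+g$ with $g$ a $\circ$-free forest, then $\phi((c'+g)[\circ:=h])=\phi(c'[\circ:=h])\cdot\phi(g)$ forces $(A_c,B_c)=(A_{c'}\phi(g),\,B_{c'}\phi(g))$; and if $c=\root(c')$, then $\phi(\root(c'[\circ:=h]))=2+x\cdot\phi(c'[\circ:=h])$ forces $(A_c,B_c)=(xA_{c'},\,2+xB_{c'})$. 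Uniqueness — which in particular makes $\alpha(c)$ independent of the chosen term — is the one place where injectivity of $\phi$ is used: since $\phi$ is injective, $\phi(c[\circ:=h])$ depends only on $c$ and on $\phi(h)$, and since $\phi(\setUF)$ is infinite while two affine maps $p\mapsto A_ip+B_i$ agreeing on infinitely many $p$ must coincide, $A_c$ and $B_c$ are pinned down. (These values in fact lie in $\mathbb{Z}[x]$.)

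Next I would check injectivity of $\alpha$ and verify the operation diagrams. On forests, injectivity of $\alpha$ is injectivity of $\phi$; on contexts, $\alpha(c)=\alpha(c')$ gives $\phi(c[\circ:=h])=\phi(c'[\circ:=h])$ hence $c[\circ:=h]=c'[\circ:=h]$ for all forests $h$, so $c$ and $c'$ induce the same map $\setUF\to\setUF$, and since $\ucf\subseteq\uf[\circ]^{\submark}$, whose elements are by definition identified with the functions they induce, $c=c'$. For the operations, writing $p$ for $\alpha$ of a forest and $(A,B),(A',B')$ for $\alpha$ of contexts, the defining identity $\phi(c[\circ:=h])=Ap+B$ makes each operation of the typed signature of $\ucf$ a polynomial function of $(\Q[x],+,\times)$: $h+h'\mapsto p\cdot p'$ and $\root(h)\mapsto 2+xp$ on forests; the hole $\circ\mapsto(1,0)$; $c+g\mapsto(A\,\alpha(g),\,B\,\alpha(g))$ and $\root(c)\mapsto(xA,\,2+xB)$ on contexts; the action $c[\circ:=h]\mapsto Ap+B$; and the vertical composition $c[\circ:=c']\mapsto(AA',\,AB'+B)$, obtained from $c[\circ:=c'][\circ:=h]=c[\circ:=c'[\circ:=h]]$. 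Every right-hand side uses only $+,\times$ and constants from $\Q[x]$, so $\alpha$ is a simulation and Theorem~\ref{thmUCFtoPol} follows.

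I expect the main obstacle not to be any single hard step but to be getting the typed bookkeeping exactly right: one must spell out precisely which instances of $+$ and of substitution are legal in $\ucf$ so that the case analysis above is exhaustive. The only conceptual ingredient — that a one-hole context becomes an affine map under $\phi$ — is easy to state, and it is exactly what fails for two or more holes, where the induced maps are polynomials of unbounded degree and cannot be stored in finitely many $\Q[x]$-coordinates; this is why the restriction to at most one hole is essential and the general-context case is left open here.
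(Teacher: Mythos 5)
Your proposal is correct and follows essentially the same route as the paper: your pair $(A_c,B_c)$ with $\phi(c[\circ:=h])=A_c\,\phi(h)+B_c$ is exactly the paper's intermediate two-sorted algebra of polynomials $B_c(x)+y\,A_c(x)$ of degree at most one in the substitution variable $y$, encoded as a pair of elements of $\Q[x]$, and your operation tables (including the composition rule $(AA',AB'+B)$) coincide with the paper's Lemma~\ref{lemTypePoltoPol}. The only cosmetic difference is that the paper obtains well-definedness of this encoding by invoking the general extension lemma for composition algebras (Lemma~\ref{lemSimSub}) and then restricting to terms with at most one occurrence of $\circ$, whereas you establish existence and uniqueness of $(A_c,B_c)$ directly by induction and an affine-agreement argument.
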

	
	\begin{corollary}\label{corUCFRA}
		Functionality of \aRA{\ucf} and equivalence of functional \aRA{\ucf} are decidable.
	\end{corollary}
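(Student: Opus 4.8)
The plan is to prove Theorem~\ref{thmUCFtoPol}; Corollary~\ref{corUCFRA} then follows in one line, by applying Corollary~\ref{corRA-Equivalence} with $\aalg=\ucf$ and the single‑variable polynomial ring. So the real task is to produce a simulation of the two‑sorted algebra $\ucf$ in $(\Q[x],+,\times)$ in the sense of Definition~\ref{defSimulation}, extending the forest encoding $\phi$ of Proposition~\ref{propSimForests} to contexts. The crucial observation — and the reason the statement is about $\ucf$ rather than about general forest contexts — is that in $\ucf$ the hole $\circ$ occurs \emph{at most once}, so the extension $\widetilde\phi$ of $\phi$ supplied by Lemma~\ref{lemSimSub} sends a one‑hole context to a polynomial of degree at most $1$ in $\circ$, i.e.\ to an affine expression $a_c\circ+b_c$ with $a_c,b_c\in\Q[x]$; this is exactly what lets us collapse the composition algebra back into plain polynomials.

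Concretely, first I would check by structural induction that $\phi(\comp{c}{\circ}{h})=a_c\,\phi(h)+b_c$ for a pair $a_c,b_c\in\Q[x]$ depending only on $c$: this holds for $c=\circ$ with $(a_c,b_c)=(1,0)$, and both $c\mapsto c+h_0$ (which multiplies the value by the constant $\phi(h_0)$) and $c\mapsto\root(c)$ (which applies $t\mapsto 2+xt$) preserve affineness in $\phi(h)$. One then defines the simulation $\alpha$ by $\alpha(h)=\phi(h)\in\Q[x]$ on the forest sort and $\alpha(c)=(a_c,b_c)\in\Q[x]^2$ on the context sort (since the simulation of a multi‑sorted algebra is defined sort by sort, the two sorts may be encoded with different numbers of polynomials). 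The second step is to verify that each operation of $\ucf$ is mirrored by an honest polynomial function of $(\Q[x],+,\times)$: the forest operations are treated as in Proposition~\ref{propSimForests}; $\alpha(\circ)=(1,0)$ is a constant; $+$ of a forest and a context is $(t,(a,b))\mapsto(ta,tb)$; $\root$ of a context is $(a,b)\mapsto(xa,2+xb)$; substitution $\comp{-}{\circ}{-}\colon C\times F\to F$ is $((a,b),t)\mapsto at+b$; and substitution $\comp{-}{\circ}{-}\colon C\times C\to C$ is the composition of the two affine maps, $((a,b),(a',b'))\mapsto(aa',\,ab'+b)$. All of these are built from $+$, $\times$ and the constants $0,1,2,x$, so $\alpha$ is a simulation.

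The last step, needed so that the reductions of functionality and equivalence through Corollary~\ref{corRA-Equivalence} are sound, is the injectivity of $\alpha$. On forests this is the injectivity of $\phi$ established in the previous subsection. On contexts, $\alpha(c)=(a_c,b_c)$ determines the map $h\mapsto\comp{c}{\circ}{h}$ (via the identity above together with injectivity of $\phi$), so it suffices to show that a one‑hole context is determined by its action on forests — and this combinatorial lemma is the part I expect to require the most care. I would prove it by induction on the total size of two contexts with the same action: if filling the hole with $\emptyset$ already yields the empty forest, the context must be $\circ$; otherwise one isolates the top‑level tree carrying the hole, eliminates the mismatched cases by substituting a forest of $N$ isolated leaves (which blows up the number of top‑level trees on only one side), and in the remaining case $c=\root(c')+h$, $d=\root(d')+e$ substitutes a chain $\root^N(\emptyset)$ with $N$ exceeding the sizes of $c$ and $d$, so that the unique ``deep'' top‑level tree on each side must coincide; this forces $h=e$, cancels in the multiset equality, yields equal action of $c'$ and $d'$, and closes the induction. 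With injectivity in hand, $\alpha$ witnesses $\ucf\simulatedBy(\Q[x],+,\times)$, which proves Theorem~\ref{thmUCFtoPol} and hence Corollary~\ref{corUCFRA}.
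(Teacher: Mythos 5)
Your proposal is correct and follows essentially the same route as the paper: there too, Corollary~\ref{corUCFRA} is obtained by combining Theorem~\ref{thmUCFtoPol} with Corollary~\ref{corRA-Equivalence}, and the simulation built in Lemmas~\ref{lemUCFtoTypePol} and~\ref{lemTypePoltoPol} is exactly your encoding of a one-hole context as the pair of coefficients of a polynomial affine in the hole variable, with the same formulas for $+$, $\root$ and substitution (the paper merely factors the construction through Lemma~\ref{lemSimSub} and an intermediate two-sorted algebra of polynomials $p(x)+yq(x)$ instead of doing the structural induction by hand). The only point where you do extra work is injectivity on the context sort: the paper avoids your combinatorial lemma that a one-hole context is determined by its action on forests by \emph{defining} $\ucf$ as a sub-algebra of $\uf[\circ]^{\mathsf{subs}}$, whose elements are equivalence classes of terms inducing the same polynomial function, so injectivity there reduces to the generic argument in Lemma~\ref{lemSimSub}; your lemma is needed only if one insists on contexts as syntactic objects, and your sketch of it is plausible.
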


	Lemma~\ref{lemSimSub} ensures that since $\uf\simulatedBy (\Rat[x],+,\times)$,
	then $\uf[\circ]^{\mathsf{subs}}\simulatedBy \Rat[x][y]^{\mathsf{subs}}$, i.e. $\Rat[x,y]$ where only $y$ can be substituted.
	\ucf is the restriction of $\uf[\circ]^{\mathsf{subs}}$ to its elements with at most one occurrence of $\circ$.
	This forms a 2-sorted algebra. We consider its natural match in $\Rat[x][y]^{\mathsf{subs}}$:
	Let $\aalg$ be the 2-sorted algebra:
	\begin{itemize}
		\item The universe is $\aalgset:=\{p(x)+yq(x): p,q\in\Q[x]\}\subseteq \Q[x,y]$.
		\item The types are $\aalgset_0:= \Q[x], \aalgset_1:=\aalgset\setminus \aalgset_0$.
		\item The operations are:
		\begin{itemize} 
			\item multiplication, defined only on pairs of types: $(0,0), (0,1), (1,0)$,
			\item $(-)[y:=(-)].$
		\end{itemize}
	\end{itemize}
	
	%
	\begin{lemma}\label{lemUCFtoTypePol}
		$\ucf$ is simulated by $\aalg$.
	\end{lemma}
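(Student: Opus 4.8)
The plan is to lift the encoding $\phi$ of Proposition~\ref{propSimForests} through the composition construction and then restrict it. Since $\setUF$ is infinite, applying Lemma~\ref{lemSimSub} to $\phi$ yields a simulation $\widetilde\phi\colon\uf[\circ]^{\submark}\to\boldsymbol{Q}[x][y]^{\submark}$: it is the homomorphic extension of $\phi$ determined by $\widetilde\phi(\circ)=y$, and Lemma~\ref{lemSimSub} already grants that $\widetilde\phi$ is well defined on equivalence classes of terms and injective. As $\ucf$ is the sub-structure of $\uf[\circ]^{\submark}$ whose elements contain at most one occurrence of $\circ$, and $\aalg$ sits inside $\boldsymbol{Q}[x][y]^{\submark}$, I would simply let $\alpha$ be the restriction of $\widetilde\phi$ to $\ucf$. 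Two things then remain to check: that $\alpha$ respects the two sorts — sending forests into $\aalgset_0=\Q[x]$ and one-hole contexts into $\aalgset_1$, so that in particular its image lands in the universe $\aalgset$ — and that every operation of $\ucf$ is realised by a polynomial function built only from the operations of $\aalg$, namely typed multiplication and substitution.

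For the sorts I would induct on terms. A $\circ$-free term is sent into $\Q[x]$ because $\phi$ has image in $\Q[x]$ and no $y$ is ever introduced. For a term with exactly one $\circ$, the image has the form $p(x)+yq(x)$ with $q\neq0$: first, $\phi(h)\neq0$ for every forest $h$ (induction: $\phi(\emptyset)=1$, a product of nonzero polynomials is nonzero, and $2+x\phi(h)$ has value $2$ at $x=0$); second, since $\Q[x]$ is an integral domain, multiplying a $p+yq$ with $q\neq0$ by a nonzero polynomial of $\Q[x]$, applying $u\mapsto 2+xu$, or substituting $p'+yq'$ with $q'\neq0$ for $y$ all keep the $y$-coefficient nonzero (and the $y$-degree equal to $1$). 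Following the single $\circ$ up through the term — it meets $+$ only against $\circ$-free forests, passes through $\root$, and occurs in the left argument of substitutions whose right argument carries the matching sort — then shows the image lies in $\aalgset_1\subseteq\aalgset$.

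For the operations: $\widetilde\phi$ being a homomorphism, the addition $+$ of $\ucf$ (available only on sort pairs $(0,0),(0,1),(1,0)$, since $(1,1)$ would create a second hole) is simulated by multiplication, which $\aalg$ carries on exactly those pairs — the ban on $(1,1)$ matching the fact that $(p+yq)(p'+yq')$ picks up a $y^2$ term and escapes $\aalgset$ — and the substitution $[\circ:=-]$ is simulated by $[y:=-]$. The only operation of $\ucf$ that is not literally an operation of $\aalg$ is $\root$, whose $\widetilde\phi$-image is $u\mapsto 2+xu$; but this is the $\aalg$-polynomial function $u\mapsto (2+xy)[y:=u]$, using the constant $2+xy\in\aalgset_1$ together with the substitution operation. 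I expect this to be the one genuinely non-routine point: $\aalg$ deliberately has no addition among its operations, so the ``$2+x\cdot{-}$'' produced by $\root$ must be smuggled in as a single substitution into the fixed one-hole constant $2+xy$, and one must verify that this substitution does realise both typings of $\root$, i.e.\ sends $\aalgset_0$ to $\aalgset_0$ and $\aalgset_1$ to $\aalgset_1$ (which is exactly the computation of the previous paragraph, via the substitution typings $(1,0)\to0$ and $(1,1)\to1$ of $\aalg$). With injectivity inherited from Lemma~\ref{lemSimSub}, this makes $\alpha$ a simulation of $\ucf$ in $\aalg$.
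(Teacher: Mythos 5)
Your proof is correct and follows essentially the same route as the paper: restrict the simulation of $\uf[\circ]^{\mathsf{subs}}$ in $\Q[x][y]^{\mathsf{subs}}$ obtained from Lemma~\ref{lemSimSub} to terms with at most one occurrence of $\circ$, and check that the two sorts are respected. You additionally spell out two verifications the paper leaves implicit --- that the $y$-coefficient of the image of a one-hole context cannot vanish (so the image really lands in $\aalgset_1$), and that the image $u\mapsto 2+xu$ of $\root$ is realised in the addition-free algebra $\aalg$ by substitution into the constant $2+xy$ --- both of which are correct.
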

	\begin{proof}
		We call $\alpha:\uf[\circ]^{\mathsf{subs}}\rightarrow\Rat[x][y]^{\mathsf{subs}}$
		the homomorphism obtained by extending last subsection's $\phi$ with mapping the substitution variable $\circ$ to $y$. 
		We restrict $\alpha$ to terms
		with at most one occurrence of $\circ$.
		The image of $h\in\ucf$ will then be a term of $\Rat[x][y]^{\mathsf{subs}}$ with at most one occurrence of $y$.
		If $\circ$ never appears in $h$, then $y$ never appears in $\alpha(h)$, thus $\alpha(h)\in \aalg_0$.
		If $\circ$ appears once in $h$, then $y$ appears once in $\alpha(h)$, thus $\alpha(h)\in \aalg_1$.
		%
	\end{proof}
	
	We now prove that \aalg is simulated by $\Q[x]$ \emph{without} substitution.
	
	\begin{lemma}\label{lemTypePoltoPol}
		$\aalg$ is simulated by $(\Q[x], +, \times)$.
	\end{lemma}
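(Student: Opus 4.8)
The plan is to read the right simulation directly off the shape of the elements of $\aalg$: every element of $\aalgset$ is a polynomial of $\Q[x,y]$ of degree at most $1$ in $y$, hence is uniquely determined by its pair of coefficients in $\Q[x]$. So I set $\alpha\colon\aalgset\to(\Q[x])^2$ to be
\[
\alpha\big(p(x)+y\,q(x)\big)=(p,q),
\]
which is well defined (the representation $p+yq$ is unique inside $\Q[x,y]$) and injective. It is also type-aware: elements of $\aalgset_0=\Q[x]$ are exactly those sent to pairs with second coordinate $0$, and elements of $\aalgset_1$ those sent to pairs with nonzero second coordinate; thus a fixed arity ($n_0=n_1=2$) works for both sorts.

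Next I would check that each operation of $\aalg$ is matched, through $\alpha$, by a polynomial function of $(\Q[x],+,\times)$. For multiplication, the operation is only applied when at most one argument has type $1$, i.e. when $q_1q_2=0$; expanding $(p_1+yq_1)(p_2+yq_2)=p_1p_2+y(p_1q_2+q_1p_2)+y^2q_1q_2$ and dropping the vanishing $y^2$-term, multiplication is simulated by
\[
f\big((a_1,b_1),(a_2,b_2)\big)=\big(a_1a_2,\ a_1b_2+a_2b_1\big).
\]
For the substitution $(-)[y:=(-)]$, the key observation is that substituting $p_2+yq_2$ for $y$ in $p_1+yq_1$ gives $p_1+q_1\,(p_2+yq_2)=(p_1+q_1p_2)+y\,(q_1q_2)$, which is again of degree at most $1$ in $y$ precisely because the factor $q_1$ multiplying the substituted expression is $y$-free; hence substitution is simulated by
\[
g\big((a_1,b_1),(a_2,b_2)\big)=\big(a_1+b_1a_2,\ b_1b_2\big).
\]
Both $f$ and $g$ are polynomial functions of $(\Q[x],+,\times)$, since each output coordinate is obtained from the inputs using only $+$ and $\times$; by Definition~\ref{defSimulation} (in its type-wise form) this shows that $\alpha$ is a simulation of $\aalg$ in $(\Q[x],+,\times)$.

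Finally I would verify the type bookkeeping. The formula $g$ respects types: if the first argument has type $0$ then $b_1=0$ and the output is $(a_1,0)$ of type $0$ (substitution into a $y$-free element does nothing); if both arguments have type $1$ then $b_1b_2\neq0$, so the output has type $1$; if only the second has type $1$, the output has type $0$. For multiplication there is one harmless degeneracy: on the type pair $(0,1)$ the product $f$ may land on $(0,0)$ exactly when the type-$0$ factor is the zero polynomial, i.e. the product is the type-$0$ element $0$ rather than a type-$1$ element; since the arity-$2$ representation of $\aalgset_0$ sends $0\in\Q[x]$ to $(0,0)$ as well, the simulation square still commutes, so this costs nothing. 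I do not expect a serious obstacle here: once the representation $\alpha(p+yq)=(p,q)$ is fixed, the lemma is a short coefficient computation, and the only points requiring a little care are the closure of $\aalgset$ under substitution with coefficients transforming as in $g$ (which hinges on $q_1$ being $y$-free) and the minor typing remark just made.
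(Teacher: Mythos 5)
Your proof is correct and takes essentially the same approach as the paper: the identical coefficient encoding $p(x)+yq(x)\mapsto(p,q)$ with the same polynomial formulas for multiplication and for $(-)[y:=(-)]$. The extra type bookkeeping you add (the symmetric multiplication formula and the remark about the zero polynomial) is harmless and, if anything, slightly more careful than what the paper writes out.
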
 
	\begin{proof}
		We will use encoding of $\aalg$ in $\Q[x]\times \Q[x]$ given by 
		\iflong
		$$p(x)+yq(x)\mapsto (p, q).$$
		\else
		$p(x)+yq(x)\mapsto (p, q).$
		\fi
		Provided this, we encode operations $+,\times, (-)[y:=(-)]$ in a straightforward manner.
		\iflong
		\begin{align*}
		\big(p_1(x)+yq_1(x)\big)+\big(p_2(x)+yq_2(x)\big)&=\big(p_1(x)+p_2(x)\big)+y\big(q_1(x)+q_2(x)\big)\\
		(p_1,q_1)+ (p_2,q_2)&:=(p_1+p_2,q_1+q_2).\\
		& \\
		\big(p_1(x)+yq_1(x)\big)\times p_2(x)&=p_1(x)\times p_2(x)+y\big(q_1(x)\times p_2(x)\big)\\
		(p_1,q_1)\times (p_2,0)&:=(p_1p_2,q_1 p_2).\\
		& \\
		(p_1(x)+yq_1(x))[y:=p_2(x)+yq_2(x)]&=p_1(x)+q_1(x)\times p_2(x)+y(q_1(x)\times q_2(x))\\
		(p_1,q_1)[y:=(p_2,q_2)]&:=(p_1+q_1\times p_2,q_1\times q_2).
		\end{align*}
		\else
		For example, for the composition operation in \aalg, we see that
		$\comp{p(x)+yq(x)}{y}{p'(x)+yq'(x)}$ is equal to $p(x)+p'(x)q(x)+yq'(x)q(x)$.
		Hence, in pairs of \Q[x], $\comp-y-$ is encoded by $\psi_{\comp-y-}:(p,q,p',q')\mapsto(p+p'q,q'q).$
		\fi
	\end{proof}
	
	Since $\simulatedBy$ is a transitive relation, Lemma~\ref{lemTypePoltoPol}
	and Lemma~\ref{lemUCFtoTypePol} give Theorem~\ref{thmUCFtoPol}.
	Once Theorem~\ref{thmUCFtoPol} is proven, Corollary~\ref{corRA-Equivalence} gives Corollary~\ref{corUCFRA}.

	
	Note that this proof extends to contexts with a bounded number of holes.
	We can add $N$ substitution variables $\enum {\circ}1N$ to $\uf$.
	Lemma~\ref{lemSimSub} gives a homomorphism $\alpha_N$ that ensures
	$\uf[\enum{\circ}1N]^{\mathsf{subs}}\simulatedBy \Q[x][\enum y1N]^{\mathsf{subs}}$.
	One could then define contexts with at most $M$ occurrences 
	of
	variables \ucfm.
	In a manner similar to Lemma~\ref{lemUCFtoTypePol}, we can find a finitely-sorted algebra that contains $\alpha_N(\ucfm)$,
	i.e. an algebra of 
	all polynomials of $\Q[x][\enum y1N]$ with a degree $\leqslant M$ regarding the variables $\enum y1N$.
	Then, in a manner similar to Lemma~\ref{lemTypePoltoPol}, we can show that finite degree composition can be encoded in $(\Q[x],+,\times)$.
	%
	%
	\begin{corollary}\label{corUCFMRA}
		$\ucfm$ is simulated by $(\Q[x], +, \times)$.
		Functionality of \aRA{\ucfm} and equivalence of functional \aRA{\ucfm} are decidable.
	\end{corollary}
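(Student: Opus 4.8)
The plan is to reuse, with indices, the two-step argument that proves Theorem~\ref{thmUCFtoPol}: first simulate $\ucfm$ by a finitely-sorted algebra of bounded-degree polynomials (the analogue of Lemma~\ref{lemUCFtoTypePol}), and then encode that algebra in $(\Q[x],+,\times)$ without substitution (the analogue of Lemma~\ref{lemTypePoltoPol}). Fix $M$. A context with at most $M$ hole occurrences uses at most $M$ distinct hole labels, so I work with labels $\circ_1,\dots,\circ_M$ and regard $\ucfm$ as the fragment of $\uf[\circ_1,\dots,\circ_M]^{\mathsf{subs}}$ whose elements, after all substitutions are carried out, contain at most $M$ hole occurrences in total. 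Since $\setUF$ is infinite, Proposition~\ref{propSimForests} and Lemma~\ref{lemSimSub} provide a composition homomorphism $\widetilde\phi\colon\uf[\circ_1,\dots,\circ_M]^{\mathsf{subs}}\to\Q[x][y_1,\dots,y_M]^{\mathsf{subs}}$ extending $\phi$ by $\circ_i\mapsto y_i$, which is a simulation, in particular injective.

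\textbf{Step 1.} The bookkeeping lemma to prove is: for every pure term $t\in\uf[\circ_1,\dots,\circ_M]$ and every $i$, the degree of $\widetilde\phi(t)$ in $y_i$ equals the number of occurrences of $\circ_i$ in $t$. This is immediate by structural induction, since $+$ of $\uf$ is sent to multiplication, under which $y_i$-degrees add (no leading coefficient cancels, as $\Q[x]$ is a domain), and $\root$ is sent to $p\mapsto 2+xp$, which alters no $y_i$-degree. As every element of $\uf[\circ_1,\dots,\circ_M]^{\mathsf{subs}}$ equals a pure term with the same occurrence counts, $\widetilde\phi$ maps $\ucfm$ into the set of polynomials $P\in\Q[x,y_1,\dots,y_M]$ with $\deg_{y_1}P+\dots+\deg_{y_M}P\le M$ — equivalently, polynomials supported on the finitely many monomials $y_1^{a_1}\cdots y_M^{a_M}$ with $a_1+\dots+a_M\le M$. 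Organise this set into a finitely-sorted algebra $\aalg_M$, the sort of $P$ being $(\deg_{y_1}P,\dots,\deg_{y_M}P)$, whose operations are exactly the $\widetilde\phi$-images of the operations of $\ucfm$: $+$ becomes multiplication, $\root$ becomes $p\mapsto 2+xp$, and each single-variable substitution becomes the corresponding polynomial substitution, in each case kept as a (partial) operation only for sort tuples for which the result still has total $y$-degree at most $M$. (One may equivalently take $\aalg_M=\widetilde\phi(\ucfm)$, which is closed under these restricted operations because $\widetilde\phi$ is a homomorphism.) Then $\widetilde\phi$ restricts to a simulation $\ucfm\simulatedBy\aalg_M$, exactly as in Lemma~\ref{lemUCFtoTypePol}.

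\textbf{Step 2.} Let $D$ be the number of monomials $y_1^{a_1}\cdots y_M^{a_M}$ with $a_1+\dots+a_M\le M$. Write each element of $\aalg_M$ uniquely as $\sum_{\ol a}p_{\ol a}(x)\,y_1^{a_1}\cdots y_M^{a_M}$ and send it to the tuple $(p_{\ol a})_{\ol a}\in\Q[x]^D$ of its $\Q[x]$-coefficients; distinct polynomials yield distinct tuples. Polynomial multiplication is bilinear in these coefficients; $p\mapsto 2+xp$ is affine in them; and substituting a polynomial for a single variable of another is, via $\big(\sum_m P_m y_i^m\big)[y_i:=Q]=\sum_m P_m Q^m$, given coordinatewise by polynomial expressions in the coefficients of the two arguments. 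On the sorted domains of Step 1 all these results already lie within the $D$ monomials, so no truncation is needed and every operation of $\aalg_M$ is realised by an honest polynomial function of $(\Q[x],+,\times)$. Hence $\aalg_M\simulatedBy(\Q[x],+,\times)$; by transitivity of $\simulatedBy$ this gives $\ucfm\simulatedBy(\Q[x],+,\times)$, and Corollary~\ref{corRA-Equivalence} then yields decidability of functionality of \aRA{\ucfm} and of equivalence of functional \aRA{\ucfm}.

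The delicate point is Step~1. Unlike forests, the number of hole occurrences behaves \emph{multiplicatively} under substitution: filling a variable that occurs $d_i$ times in a $j$-hole context by a $k$-hole context produces $j+d_i(k-1)$ holes, which can be as large as $jk$. So one must be careful that $\ucfm$, and its polynomial image $\aalg_M$, are genuine finitely-sorted algebras whose substitution operations are only those staying below the bound $M$ — which is exactly the class of substitutions a register automaton over $\ucfm$ can perform — and that these restricted operations really do land in $\aalg_M$ (taking $\aalg_M=\widetilde\phi(\ucfm)$ being the cleanest way to guarantee this). Everything else, namely the degree lemma and the coefficient encoding, is the $M=1$ computation of Lemmas~\ref{lemUCFtoTypePol} and~\ref{lemTypePoltoPol} redone with indices.
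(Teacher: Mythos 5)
Your proof follows exactly the route the paper sketches for this corollary: extend $\phi$ to several substitution variables via Lemma~\ref{lemSimSub}, carve out the finitely-sorted algebra of polynomials of bounded degree in the $y_i$ (the analogue of Lemma~\ref{lemUCFtoTypePol}), and encode that algebra coefficient-wise into $(\Q[x],+,\times)$ (the analogue of Lemma~\ref{lemTypePoltoPol}), so it is correct and essentially the same argument, just written out in full. Your observation that hole counts behave multiplicatively under substitution, forcing the sorted operations to be partial, is a genuine detail the paper's one-paragraph sketch glosses over, and you handle it correctly.
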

	
	
	

	\iflong
	
	\else
	\subsection{Encompassing of MSO}
	
	Corollary~\ref{corUCFRA} gives decidability results on the class of \aRA{\ucf}.
	We motivated this class as a relevant extension of \aRA{\uf} by exhibiting a transformation (see Figure~\ref{figConcat})
	that required contexts to be expressed.
	However, this class is not immediately relevant in its properties or expressiveness.
	In this section, we prove that \aRA{\ucf} can express strictly more than all MSO-definable transformations on unordered trees.
	Note that \aRA{\ucf} define functions from \emph{binary ordered} trees to \setUCF, not from \setUF to \setUF.
	We say that an \aRA{\ucf} expresses a function $f:\setUF\rightarrow\setUF$ if
	for a binary tree $t$ that is the ``FCNS'' encoding of a forest $h$, its image for the tree $t$ is $f(h)$.
	
	We briefly present a definition of MSO formulae and transformations.
	More complete definitions exist elsewhere in the literature (e.g. \cite{Courcelle94}).
	
	The syntax of monadic second order logic (MSO) is:
	$$\phi:=\phi\land\phi \;|\; \lnot\phi \;|\; \exists x \phi \;|\; \exists X \phi \;|\; x\in X$$
	where lower cases $x$ are node variables, and upper cases $X$ are set variables.
	This syntax is enriched by different relations to describe the structure of the objects we consider:
	\begin{itemize}
		\item For binary trees (BT), we add two relations $\child_L(x,y)$ and $\child_R(x,y)$ that express that $y$ is the left child (resp.~right child) of $x$.
		\item For unranked ordered forests (OF), we add $\fc(x,y)$, that expresses that $y$ is the first child of $x$, and $\ns(x,y)$ that express that $y$ is the brother directly to the right of $x$.
		\item For unranked unordered forests (UF), we only add the relation $\child(x,y)$, that expresses that $y$ is a child of $x$. The relation ``Sibling'' would only be syntactic sugar.
	\end{itemize}
	
	An \emph{MSO-definable transformation} with $n$ copies is a transformation that for each input node $x$, makes $n$ output nodes \enum x1n.
	The presence or absence of an edge in the output are dictated by formulae defining the transformation.
	A MSO-definable transformation is characterized by its formulae $\varphi_{\mathcal{R},i,j}$ for each $1\leqslant i,j \leqslant n$,
	and each structure relation $\mathcal R$ (e.g. $\fc$ and $\ns$ if the output is ordered forests).
	
	For example, if one wanted to reverse left and right children in binary trees, this would be a transformation definable in \msotype {BT}{BT} with one copy,
	where $\varphi_{\child_L,1,1}(x,y)=\child_R(x,y)$, i.e.~$y_1$ is $x_1$'s left child in the output iff $y$ was $x$'s right child in the input,
	and conversely $\varphi_{\child_R,1,1}(x,y)=\child_L(x,y)$.
	
	We note that this definition can express transformations between any two tree algebras.
	For example, the ``FCNS'' decoding of Figure~\ref{figFCNS} can be encoded in MSO from binary trees to \setUF.
	Since we will use different combinations of input-output in this part, we introduce the notation
	\msotype {\bullet}{\bullet} to denote MSO from one type of trees to the other.
	\msotype {BT}{OF} designs MSO-definable functions from binary trees to ordered forests,
	and \msotype {UF}{UF} designs MSO-definable functions from unordered forests to unordered forests.
	
	\begin{proposition}\label{propMSOtoaRA}
		Every function of $\msotype {UF}{UF}$ can be described by an \aRA{\ucf}.
	\end{proposition}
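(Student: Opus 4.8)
The plan is to reduce the statement to the automata-theoretic characterisation of MSO over trees. First I would observe that the ``FCNS'' decoding of Figure~\ref{figFCNS} is itself an MSO-definable map from binary trees to unordered forests; hence, for $f\in\msotype{UF}{UF}$, the composite $g := f\circ(\text{FCNS-decode})$ is, by closure of MSO transductions under composition~\cite{Courcelle94}, an MSO-definable map from binary trees to unordered forests satisfying $g(t)=f(h)$ whenever $t$ is the FCNS encoding of $h$. Realising $f$ by an \aRA{\ucf} in the required sense is therefore exactly realising $g$ by an \aRA{\ucf} that makes a bottom-up pass on the binary input (and one may further restrict the domain to the regular set of genuine FCNS encodings). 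So the real task becomes: \emph{every MSO-definable map from binary trees to unordered forests is described by an \aRA{\ucf}}.

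For that, I would fix defining formulas for $g$ with $n$ copies and take $k$ to be the maximal quantifier rank occurring among them. By the automata-theoretic characterisation of MSO over binary trees there is a deterministic bottom-up finite tree automaton $\mathcal A$ that computes, for each subtree $s$ of the input, its rank-$k$ MSO type, together with the types of $s$ with one or two distinguished positions. From such a type one can read off, for each input node $u$ in $s$ and each copy $i$, whether $(u,i)$ is an output node, and also which edges of the output forest join positions lying ``close'' to $u$. The \aRA{\ucf} $\mathcal M$ I would build uses $\mathcal A$'s state as its control state (so $\mathcal M$ is deterministic, hence functional); its registers hold elements of $\ucf$, namely a bounded family --- indexed by information extracted from the types and by copy indices --- of partial pieces of the output forest: the maximal connected fragments of $g(t)$ whose nodes originate in the current subtree, each carried as a forest (sort $0$) or as a one-hole context (sort $1$), the hole marking the place where material coming from the not-yet-read part of the input will attach below. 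On reading a binary node, $\mathcal M$ recombines the register tuples of its two children using $+$, $\root$ and $\comp{-}{\circ}{-}$: it merges fragments that have just become connected, applies $\root$ when a new output node is placed on top of a piece, and plugs one piece into the hole of another along the finitely many edges the new type reveals. At the root, the output function fills any remaining hole with $\emptyset$ and returns the resulting forest, which equals $g(t)$.

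The hard part will be the two statements that make this construction work. The first is that the family of fragments needed at each subtree is bounded uniformly in the input --- this is precisely the finite-copying property of MSO transductions~\cite{EngelfrietManeth03}, that an input subtree contributes only boundedly many connected pieces to the output, and I would invoke rather than reprove it. The second, which I expect to be the genuine obstacle, is that each such piece can be represented with \emph{at most one} hole --- so that register values really lie in $\ucf$ and not merely in some $\ucfm$ --- and that each input-reading step is a polynomial function of $\ucf$. This is where the unordered setting has to be used: because the output forest carries no order, sibling material may be regrouped freely by $+$ and commitments to substitutions may be deferred until they are forced, and the claim to establish is that consequently each stored piece only ever needs to record a single forest-valued attachment point to the rest of the output, so that joining two pieces is a single $\comp{-}{\circ}{-}$. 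Making this precise --- listing, via the rank-$k$ types, which fragments are present at a node, which of them are contexts, and exactly how a binary-node transition recombines them --- is the bulk of the argument, and it is the unordered-forest analogue of the classical proof that MSO tree transductions are computed by single-use-restricted tree transducers, the unorderedness of the output being what keeps everything inside the one-hole algebra $\ucf$.
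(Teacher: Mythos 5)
Your overall architecture (encode the input via FCNS, then realise the resulting MSO map on binary trees by a bottom-up register machine whose registers hold output fragments) matches the shape of the paper's argument, but the paper does not carry out the direct type-based construction you sketch: it reduces to \emph{ordered} forests first (Lemma~\ref{lemMSOtrad} artificially linearises the output siblings inside MSO), then cites the Alur--D'Antoni theorem that every MSO-definable map from binary trees to ordered forests is computed by a Bottom-Up Streaming Tree Transducer, and finally observes that Bottom-Up STT are exactly register automata over the ordered one-hole context algebra OCF, so forgetting the order yields an \aRA{\ucf}. The entire combinatorial content of your second and third paragraphs is outsourced to that citation.

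The genuine gap in your proposal is the step you yourself flag as ``the genuine obstacle'': the claim that each maximal connected output fragment attached to an input subtree can be maintained as a context with \emph{at most one} hole, so that register values stay in \ucf rather than some \ucfm or worse. The finite-copying property you invoke bounds the \emph{number} of fragments per subtree, not the number of attachment points below a single fragment, and nothing in your sketch rules out a fragment that must receive material from the unread part of the input at several incomparable positions. Moreover, the reason you offer --- that unorderedness of the output is what collapses everything to one hole --- is not the mechanism that actually makes this work: the paper establishes the one-hole register representation already for \emph{ordered} output (STT registers are nested words with a single placeholder), and only discards the order at the very last step. Proving the one-hole normal form from scratch is essentially re-deriving the MSO-to-STT equivalence, a substantial theorem in its own right; as written, your argument asserts its conclusion rather than proving it. To repair the proposal along the paper's lines, you would order the output in MSO (which requires the small but nontrivial construction of Lemma~\ref{lemMSOtrad}, defining $\varphi_{\fc,i,j}$ and $\varphi_{\ns,i,j}$ from an MSO-definable order on the sets $S_{x,i,j}$), invoke the STT characterisation, and then translate STT updates into polynomial operations of \ucf.
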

	
	The proof we provide to show this Proposition has three arguments:
	\begin{enumerate}
		\item \msotype {UF}{UF} can be represented by functions of \msotype {BT}{OF}.
		\item Bottom-UP Streaming Tree Transducers (STT)~\cite{AlurDAntoni17} describe all functions of \msotype {BT}{OF}.
		\item Bottom-Up STT can be expressed as register automata.
	\end{enumerate}
	\partitle{From \msotype {UF}{UF} to \msotype {BT}{OF}}
	We say that a binary tree $t$ \emph{represents} an unordered forest $h$
	if the ``FCNS'' decoding of $t$ as represented in Figure~\ref{figFCNS} is $h$. Note that $t$ is not unique for $h$,
	but every $t$ represents a unique $h$.
	Similarily, we can say that an unranked ordered forest $h$ \emph{represents} an unordered forest $h'$
	if by forgetting the siblings' order in $h$, we get $h'$. Once again such an $h$ is not unique for $h'$,
	but every $h$ represents a unique $h'$.
	We can extend this notion to MSO transformation.
	
	\begin{definition}
		A function $f\in \msotype {BT}{OF}$
		\emph{represents} a function $f'\in \msotype {UF}{UF}$ if:
		\begin{itemize}
			\item For every unordered forest $h$ such that $f'$ is defined over $h$,
			then there exists at least one binary tree $t$ such that
			$t$ represents $h$, and $f$ is defined over $t$
			\item For every binary tree $t$ such that $f$ is defined over $t$,
			$f'$ is defined over the unordered forests $h$ such that $t$ represents $h$,
			and $f(t)$ represents $f'(h)$.
		\end{itemize}
	\end{definition}
	
	Once again such an $f$ is not unique for $f'$,
	but every $f$ represents a unique $f'$.
	Furthermore, it is always possible to find a representant $f$ for an MSO-definable function $f'$.
	
	\begin{lemma}\label{lemMSOtrad}
		If $f'\in \msotype {UF}{UF}$, then there exists $f\in \msotype {BT}{OF}$ that represents $f'$.
	\end{lemma}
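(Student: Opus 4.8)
The plan is to build the representing function $f\in\msotype{BT}{OF}$ from $f'\in\msotype{UF}{UF}$ by composing three MSO-definable transformations: first decode the binary input tree into an unordered forest, then apply $f'$, then encode the resulting unordered output forest back as an ordered forest. Since MSO-definable transformations are closed under composition (a standard fact, see~\cite{Courcelle94}), it suffices to realise each of these three pieces as an MSO transformation and check that the composite meets the two bullet points in the definition of ``represents''.

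First I would note that ``FCNS''-decoding, viewed as a map from binary trees to unordered forests, is itself MSO-definable — this is essentially the transformation already illustrated in Figure~\ref{figFCNS}, read as a function in $\msotype{BT}{UF}$: one copy per node, and the output $\child(x,y)$ relation is defined from the input $\child_L,\child_R$ relations by the usual first-child/next-sibling pattern (a node $y$ is a child of $x$ in the output iff $y$ is reachable from the left child of $x$ by a path of right edges). Dually, I would describe a canonical ``un-ordering'' that exhibits any ordered forest as representing the unordered forest obtained by forgetting sibling order; this is trivially MSO-definable (it is the identity on nodes, sending $\fc,\ns$ to the derived $\child$ relation). The only real content is to pick, for the output side, some MSO-definable section: a transformation in $\msotype{UF}{OF}$ that chooses \emph{one} ordered representative of each unordered forest. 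One clean way is to have $f'$ itself produced by first fixing a linear order on siblings via the MSO interpretation — but since we only need \emph{existence} of a representative, we may instead absorb the choice into the decoding: compose $f'$ (as a $\msotype{UF}{UF}$ function) on the left with FCNS-decoding $\msotype{BT}{UF}$ and on the right with \emph{any} MSO-definable ordering map $\msotype{UF}{OF}$; such an ordering map exists because an unordered forest presented with the $\child$ relation can be linearised using an MSO-definable order on the underlying node set (e.g. any definable total order on positions, which exists for tree structures).

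With the three pieces in hand, $f := \text{order} \circ f' \circ \text{FCNS-decode}$ lies in $\msotype{BT}{OF}$ by closure under composition. I would then verify the two conditions: given $h$ in the domain of $f'$, pick any binary tree $t$ with FCNS-decoding $h$ (such $t$ exists — every unordered forest has at least one FCNS-encoding); then $f$ is defined on $t$ and, by construction, $f(t)$ is an ordered forest whose un-ordering is $f'(h)$, i.e. $f(t)$ represents $f'(h)$. Conversely, if $f$ is defined on $t$, then FCNS-decode is total so $h$ is well-defined, $f'$ is defined on $h$ by construction of the composite's domain, and again $f(t)$ represents $f'(h)$. The only subtlety — and the step I expect to be the main obstacle — is producing the MSO-definable ordering map $\msotype{UF}{OF}$ cleanly: one must exhibit an MSO formula that totally orders the children of each node using only the $\child$ relation (and the available node/set quantification), and argue it yields a well-defined ordered forest. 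This is where I would spend the most care, likely invoking a definable order on tree positions rather than constructing one by hand.
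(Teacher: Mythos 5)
Your first two factors (FCNS-decoding as a $\msotype{BT}{UF}$ transduction, then $f'$) match the paper's first step, which performs the same input re-encoding by substituting $\exists z\,(\child_L(x,z)\land\child_R^*(z,y))$ for $\child(x,y)$ in the formulas of $f'$. The gap is in your third factor. There is no MSO-definable ``ordering map'' in $\msotype{UF}{OF}$, and the claim that ``a definable total order on positions exists for tree structures'' fails precisely for \emph{unordered} ones: an unordered forest can have nontrivial automorphisms (e.g.\ the forest $a+a$ consisting of two identical single-node trees, or any node with two isomorphic subtrees among its children), and every MSO formula $\varphi(x,y)$ must be invariant under these automorphisms, so it cannot hold of $(x,y)$ and fail of $(y,x)$ for a pair of interchangeable siblings. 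Hence no MSO transduction over the signature $\{\child\}$ can linearise siblings, and your composite $f=\text{order}\circ f'\circ\text{FCNS-decode}$ cannot be assembled: the very step you flagged as the main obstacle is in fact impossible in the form you propose.

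The paper sidesteps this by never ordering the \emph{unordered} output structure intrinsically: it orders the output siblings using the order already present on the \emph{input} binary tree. Concretely, for each output node $x_i$ the set $S_{x,i,j}=\{y\mid\varphi_{\child,i,j}(x,y)\}$ is a set of nodes of the input binary tree, which carries an MSO-definable infix order; the relations $\varphi_{\fc,i,j}$ and $\varphi_{\ns,i,j}$ are then defined by sorting children first by copy index $j$ and then by that infix order. The lesson is that the linearisation must be done at the level of the whole $\msotype{BT}{OF}$ transduction, where the ordered input is still visible, rather than factored through an intermediate unordered structure where the order has already been lost. Your verification of the two bullet points of ``represents'' would go through once the construction is repaired this way, since any choice of sibling order on the output yields a representative of $f'(h)$.
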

	
	\begin{proof}
		We start by encoding the input, transforming $f'$ into a function of \msotype {BT}{UF}.
		To modify $f'$ so that it transforms trees that represent $h$ into $f'(h)$, one has to replace every occurence
		of $\child(x,y)$ into $\varphi_{\child,i,j}$ by its ``FCNS'' encoding, i.e.~$\exists z \mid \child_L(x,z)\land\child^*_L(z,y)$.
		
		Encoding the output requires to change $\varphi_{\child,i,j}$ into two relations
		$\varphi_{\fc,i,j}$ and $\varphi_{\ns,i,j}$, i.e.~to artificially order the siblings of the output forest.
		To that effect, we note that from the BT-formula $\varphi_{\child,i,j}$, one can describe in BT-MSO
		a set $S_{x,i,j}=\lbrace y \mid \varphi_{\child,i,j}(x,y)\rbrace$.
		Since $S_{x,i,j}$ is a set of input nodes of a binary tree, it is totally ordered by their occurrence in the infix run.
		This order can be expressed as a BT-MSO relation.
		We can then decide to order all the children $y_j$ of the output node $x_i$.
		
		To find the first child of a node in the output, we say that
		$\varphi_{\fc,i,j}(x,y)$ if $j$ is the first index where $S_{x,i,j}\neq\emptyset$
		and $y$ is its first element.
		Similarily, to find the next sibling of a node in the output, we say that 
		$\varphi_{\ns,i,j}(y,z)$ if $\exists x \mid \varphi_{\child,k,i}(x,y)\land\varphi_{\child,k,j}(x,z)$,
		and either $i=j$ and $y,z$ are consecutive elements of $S_{x,k,i}$,
		or $y$ is the last element of $S_{x,k,i}$, $z$ is the first element of $S_{x,k,j}$,
		and $j$ is the first index bigger than $i$ such that $S_{x,k,j}\neq\emptyset$.
	\end{proof}
	
	\partitle{From \msotype {BT}{OF} to STT to RA}
	The next step is to use an existing result from the literature~\cite{AlurDAntoni17}
	that describes a model of transducers that describes all $\msotype {BT}{OF}$.
	The formalism in question are \emph{Streaming Tree Transducers} (STT).
	A STT is an automaton on nested words (words representing trees) that maintains a stack of register configurations.
	The nesting of the words dictates how this stack behaves:
	each opening letter ${<}a$ stores the current variable values in the stack to start with fresh ones,
	then each closing letter $a{>}$ uses the current variable values and the top of the stack to generate new values for the registers.
	
	In~\cite{AlurDAntoni17}, STT are limited to linear functions for the update of their value.
	Furthermore, the paper proves that without loss of expression, one can consider
	\emph{Bottom-Up STT}, where reading an opening symbol ${<}a$ resets the state as well as the registers.
	On such STT, the behavior of a STT reading the nested word of a subtree does not depend on what occurs before or after,
	and its computation behaves like a register automaton reading a tree in a bottom-up manner.
	We will consider the class of Bottom-Up STT that read nested representations of binary trees (Bottom-Up BT STT).
	
	\begin{proposition}\label{propMSOtoSTT}
		Every function of \msotype {BT}{UF} is described by a Bottom-Up BT STT.
	\end{proposition}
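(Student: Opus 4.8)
The plan is to reduce to the ordered-output case and then quote the expressiveness theorem for streaming tree transducers from~\cite{AlurDAntoni17}. Given an arbitrary $f'\in\msotype{BT}{UF}$ --- specified by a copy number $n$, an MSO-definable domain, and formulas $\varphi_{\child,i,j}$ for $1\leqslant i,j\leqslant n$ --- I would first turn it into a function $f$ of class \msotype{BT}{OF} by \emph{artificially ordering} the children of every output node. This is exactly the output-encoding half of the proof of Lemma~\ref{lemMSOtrad}: for a fixed output node $x_i$ and copy $j$, the set $S_{x,i,j}=\{y\mid\varphi_{\child,i,j}(x,y)\}$ consists of nodes of the input binary tree, hence is totally ordered by the infix traversal, and that order is BT-MSO-definable; ordering the copies $j$ lexicographically and concatenating the sets $S_{x,i,j}$ in that order produces BT-MSO formulas $\varphi_{\fc,i,j}$ and $\varphi_{\ns,i,j}$ that linearly order the whole offspring of $x_i$ without changing which nodes are its children. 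The point of this step is that for every binary tree $t$ in the common domain, the ordered forest $f(t)$ represents the unordered forest $f'(t)$; hence any machine computing $f$ will describe $f'$ in the sense required by the proposition.

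Second, I would realize $f$ by a Bottom-Up BT STT by invoking the main theorem of~\cite{AlurDAntoni17}: a transformation between nested words is MSO-definable if and only if it is computed by a single-use-restricted streaming tree transducer, and moreover such a transducer can be taken to be bottom-up --- reading an opening symbol resets both state and registers. Since binary trees and unranked ordered forests admit the standard nested-word encodings, MSO over such a structure is interchangeable with MSO over its encoding, and restricting the input to the regular set of nested words that code binary trees merely restricts the transducer's finite control, it follows that $f$ is computed by a bottom-up STT $M$ reading nested representations of binary trees, i.e.\ a Bottom-Up BT STT. On input $t$, $M$ outputs a nested word coding the ordered forest $f(t)$, whose order-forgetting is $f'(t)$, and $M$ is undefined precisely on the complement of the regular domain of $f'$, which its finite control can recognize. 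Composing the two steps, $M$ describes $f'$.

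The only genuinely non-trivial ingredient here is the STT expressiveness theorem, which I import wholesale from~\cite{AlurDAntoni17}; everything else --- the output-reordering construction and the dictionary between the tree vocabulary $\child_L,\child_R,\fc,\ns$ used in this paper and the nested-word vocabulary of~\cite{AlurDAntoni17} --- is elementary, and the reordering part is essentially already carried out inside the proof of Lemma~\ref{lemMSOtrad}. I therefore expect the main obstacle to be bookkeeping at the interface with~\cite{AlurDAntoni17}: in particular, checking that their bottom-up normal form still applies once the input is restricted to binary-tree-shaped nested words, and that MSO-definability is preserved in both directions by the nested-word encodings of binary trees and of ordered forests --- rather than any substantially new argument.
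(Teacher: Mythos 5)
Your proof is correct and takes essentially the same route as the paper, which disposes of this proposition by citing Theorems 3.7 and 4.6 of~\cite{AlurDAntoni17} (the bottom-up normal form for STT and the MSO expressiveness of STT, respectively) and restricting the input to nested words coding binary trees. The only difference is that you additionally carry out the output-ordering step (turning the unordered children relation into first-child/next-sibling formulas) inside this proof, whereas the paper performs that step once and for all in the proof of Lemma~\ref{lemMSOtrad}; this duplication is harmless and in fact usefully reconciles the proposition's stated output type \msotype{BT}{UF} with the fact that an STT emits ordered nested words.
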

	
	\begin{proposition}\label{propSTTtoaRA}
		Every function of a Bottom-Up BT STT is described by an \aRA{OCF}.
	\end{proposition}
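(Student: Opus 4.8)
The idea is that a Bottom-Up STT, run on the nested-word linearisation of a binary tree, is a stack machine whose push/pop discipline is in bijection with the nodes of the tree, and whose Bottom-Up normal form guarantees that no information ever flows downward across an opening tag (reading ${<}a$ resets both the control state and the registers); so the linear scan can be reorganised as a single bottom-up pass on the tree. Concretely, fix a Bottom-Up BT STT, and for a subtree $t$ run it on the word fragment that encodes $t$, starting from the reset configuration with an empty stack; by well-nesting the stack is empty again at the end, so this yields a well-defined control state $q_t$ and a well-defined register tuple $\bar\rho_t$. These register values live in the ordered-forest-and-context algebra OCF (ordered forests for type-$0$ registers, one-hole ordered contexts for type-$1$ registers), since that is exactly what STT registers hold. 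Call $\sigma(t)=(\mathrm{lab}(t),q_t,\bar\rho_t)$ the \emph{summary} of $t$: finite in its control part, OCF-valued in its data part.

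The main step is to show that $\sigma(a(t_1,t_2))$ is determined by $\sigma(t_1)$, $\sigma(t_2)$ and $a$: the control part is a deterministic transition on the children's control parts together with $a$, and the data part is obtained by composing the constantly many STT register updates that the scan performs while reading the tags ``around'' the children's sub-words — the push/reset at opening tags and the pop-and-recombine at the closing tags of $t$ and of its children. It may be necessary to keep a bounded amount of auxiliary OCF-data in $\bar\rho_t$ (for instance the register tuple reached at the end of $t$'s child hedge, \emph{before} its closing tag, since that rather than $\bar\rho_t$ itself is what a parent recombines), but this changes nothing essential. Every STT register update is built from concatenation, adding a root, and single-parameter substitution — the operations of OCF — so all these composite update functions are polynomial functions of OCF; leaves are handled identically with an empty child hedge.

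From this one reads off the automaton $\M=(\SigIn,n,Q,\rul,\outfun)$: $\SigIn$ is the binary-tree signature, $Q$ is the finite set of summary control parts sorted according to the STT register types (so the OCF-tuples are well typed), $n$ is the number of OCF-registers of the summary, $\rul$ consists of the rules just described, and $\outfun$ on a state with root label $a_0$ is the map sending the stored OCF-tuple to $\pi(g_{a_0}(\bar r_{\mathrm{init}},\cdot))$, where $g_{a_0}$ is the recombination at the root's closing tag, $\bar r_{\mathrm{init}}$ the STT's fixed initial register tuple and $\pi$ its output projection — once more an OCF-polynomial function. A routine induction on $t$ then shows that $t\rightarrow(q,\bar\rho)$ in $\M$ iff $(q,\bar\rho)$ is the summary of $t$, so $\sem\M(t)$ is exactly the STT's output on the encoding of $t$; and since $\M$ is deterministic it is functional.

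The step I expect to be hard is the compositionality claim: fixing the \emph{exactly right} finite summary and proving it is genuinely compositional forces one to track what sits on the STT stack at each opening tag of the binary-tree encoding, and to argue that the Bottom-Up restriction makes the top of the stack irrelevant to a child's sub-run until that sub-run is over, so the child's summary does not depend on it; one must also check, against the definition of STT in~\cite{AlurDAntoni17}, that the closing-tag recombination stays within the OCF operations and respects the two sorts (forest vs.\ one-hole context). Everything else is routine bookkeeping.
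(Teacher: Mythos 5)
Your proposal is correct and follows essentially the same route as the paper's appendix proof: summarize each subtree by its root label, the STT state, and the register values taken \emph{just before} the subtree's closing tag (deferring the closing-tag recombination to the parent, since the popped stack symbol depends on the parent's context), and use the bottom-up reset at opening tags to make this summary compositional. The one imprecision — initially defining the summary via a stand-alone run past the closing tag — is exactly the issue your own parenthetical fixes, and with that fix your construction coincides with the paper's states $(q,a)$ and transitions $c((q,a),(q',b))\rightarrow((q_{a,b,q,q'},c),\phi_{a,b,q,q'})$.
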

	
	Proposition~\ref{propMSOtoSTT} comes directly from Theorems 3.7 and 4.6 of~\cite{AlurDAntoni17}: 3.7 explains Bottom-Up BT-STT are as powerful as general BT-STT, and 4.6 states that STT can describe any function of \msotype {BT}{UF}.
	Proposition~\ref{propSTTtoaRA} is not directly proven in~\cite{AlurDAntoni17}
	but their definition of Bottom-Up STT is made specifically to that end. We provide more details in the appendix.
	
	\partitle{End of Proof}
	To turn that \aRA{OCF} into an \aRA{\ucf}, we just have to change the ordered concatenation of OCF
	to the unordered concatenation of \ucf. By combining Lemma~\ref{lemMSOtrad},
	Proposition~\ref{propMSOtoSTT} and Proposition~\ref{propSTTtoaRA},
	we conclude our proof of Proposition~\ref{propMSOtoaRA}.
	
	We note that every MSO-definable function can be described by a \aRA{\ucf}, however the converse is not true;
	consider a function that creates output of exponential size (whereas MSO can only describe functions of linear size increase).
	Consider unary input trees of form $\root_a^n(\bot)$, and a 1-counter \aRA{\ucf} with rules
	$\bot\rightarrow q(\root())$, and $a(q(h))\rightarrow h+h$.
	The image doubles in size each time a symbol is read.
	Unsurprisingly, this counterexample uses the copyful nature of \aRA{\ucf},
	as copyless restrictions tend to limit the expressivity power of register automata to MSO classes~\cite{AlurCerny10,AlurDAntoni17}.
	
	\fi
	\section {On decidability of MTT equivalence. Equivalence of \aRA{polynomials} with composition is undecidable}\label{secPolySub}
	In this section, we tru to use the ``Hilbert Method'' to study the equivalence problem on Macro Tree Transducers (MTT)~\cite{EngelfrietVogler85}.
	MTT have numerous definitions. For this paper, we will consider them to be register automata on an algebra of ranked trees
	with an operation of substitution on the leaves; observe this is exactly \aRA{$\orderedTreesAlg[X]^{\mathsf{subs}}$}.
	The algebra $\orderedTreesAlg$ (ranked trees without substitution on the leaves) can be simulated by words with concatenation (via nested word encoding).
	Words with concatenation can be encoded by $\Q$ (see, for example, the proof of Corollary~10.11~\cite{toolbox}).
	Thus, $\orderedTreesAlg\simulatedBy\Q$.
	Finally, by Lemma~\ref{lemSimSub}, we have that $\orderedTreesAlg[X]^{\mathsf{subs}}\simulatedBy\Q[X]^{\mathsf{subs}}$.
	This means that if equivalence is decidable for \aRA{$\Q[X]^{\mathsf{subs}}$}, then MTT equivalence is decidable.
	Unfortunately,  we will show that even with one variable $x$, the register automata of \aRA{\compalg} have undecidable functionality and equivalence:
	\begin{theorem}\label{thmUndecidability}
		The functionality problem for \aRA\compalg and equivalence problem for functional \aRA\compalg are undecidable.
	\end{theorem}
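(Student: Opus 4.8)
The plan is to reduce an undecidable problem — naturally the halting problem for (two‑counter) machines, though the Post Correspondence Problem or Hilbert's tenth problem work equally well — to non‑functionality of an \aRA{\compalg}, and then to non‑equivalence of two functional \aRA{\compalg}. Fix a machine $M$. An input tree $t$ of the automaton will encode a \emph{candidate} halting computation of $M$: a sequence $C_0,C_1,\dots,C_H$ of configurations, laid out so that the data needed to verify one step sits in a bounded neighbourhood of the tree. I would then design the automaton so that a run, reading $t$ bottom‑up, accumulates in one register a value $E(t)\in\Q[x]$ that is nonzero precisely when $t$ encodes a \emph{genuine} halting computation of $M$. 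The point is that "$t$ is a genuine halting computation" is a conjunction of many conditions ($C_0$ initial, $C_H$ halting, and each $C_{j+1}$ the successor of $C_j$), most of which are \emph{local} constraints over a \emph{finite} alphabet; for such a constraint the automaton can emit, from the finite state it is in, a fixed polynomial (e.g.\ a product of factors $(s-c)$ over the finitely many forbidden symbols $c$, or of squared differences to forbidden windows) which is nonzero exactly when the constraint holds, and all these values are combined into $E(t)$ using the $\times$ operation. Taking $\M$ to nondeterministically output either $E(t)$ or the constant $0$, we get that $\M$ is non‑functional iff some $t$ has $E(t)\neq 0$, i.e.\ iff $M$ halts; and the two \emph{deterministic} (hence functional) \aRA{\compalg} computing $t\mapsto E(t)$ and $t\mapsto 0$ on a common regular domain are equivalent iff $M$ does not halt. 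Semi‑decidability of non‑functionality and of non‑equivalence (guess the tree and the runs) is immediate, so both problems are undecidable.

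The role of the substitution operation — and the reason this does not contradict Theorem~\ref{thmPolyRA} — is that a one‑pass bottom‑up automaton cannot keep a whole configuration in its finite control while reading the next one, so part of the verification must be carried by the registers, and for that a register must be \emph{processed} in ways that $+,\times$ alone do not allow. Here I would use identities such as $\comp{p}{x}{c}=p(c)$, which lets a register storing a word as a polynomial expose its coordinates by substituting constants, and, on monomial encodings of natural numbers, $x^a\cdot x^b=x^{a+b}$ together with $\comp{x^a}{x}{x^b}=x^{ab}$, which give addition and multiplication of exponents and hence enough arithmetic to evaluate the polynomial relations that tie $C_j$ to its successor. This is exactly the expressive power that the substitution operation adds over $\boldsymbol{Q}[X]$ and that the zeroness argument behind Theorem~\ref{thmPolyRA} depends on being absent.

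The main obstacle is precisely this: the algebra of polynomials with substitution still forbids division and cannot branch on, or test equality of, register values, so one cannot "verify a candidate" by the naive strategy of comparing two unbounded strings and accepting when they agree — a polynomial can be made to vanish when two register values differ, but not when they coincide. The reduction therefore has to be engineered so that every test is either a membership test in a \emph{finite} alphabet (whose indicator is a genuine polynomial, and which then points in the \emph{right} direction, namely nonzero iff consistent) or is folded into the single output value $E(t)$ by products, while the inherently two‑dimensional geometry of a computation is threaded through one bottom‑up pass using only $+$, $\times$ and $\comp{-}{x}{-}$. Finding an encoding and a layout for which all of this simultaneously goes through, while keeping the construction a bona fide \aRA{\compalg}, is the technical heart of the theorem; once it is in place, the passage to functionality (one nondeterministic automaton choosing between the two behaviours) and to equivalence of functional automata is routine.
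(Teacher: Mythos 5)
Your overall architecture is the same as the paper's: reduce from two‑counter machines, have the input spell out a candidate computation, accumulate in a witness register a product of ``test'' polynomials that is nonzero exactly when the candidate is genuine, use substitution to evaluate these tests at register values, and finish by comparing against (or nondeterministically branching to) the constant‑$0$ automaton. You also correctly isolate the one place where this could fail: over $+$, $\times$ and $\comp{-}{x}{-}$ you can build a polynomial that vanishes on a \emph{finite forbidden set}, but the test ``counter $c_i$ equals $0$'' has an \emph{unbounded} forbidden set $\{1,2,3,\dots\}$, so no fixed polynomial emitted from the finite control can implement it. Having named this obstacle, you defer it (``finding an encoding \dots is the technical heart of the theorem''), so the proposal stops exactly where the actual proof begins. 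The paper's resolution is a specific device you do not supply: a register $r_+$ counting the number $m$ of steps read so far, and a register $r_{zt}$ holding the \emph{dynamically grown} polynomial $\ptestc m(x)=\prod_{1\leqslant j\leqslant m}(x-j)$, updated by $r_{zt}\leftarrow r_{zt}\times(x-r_+-1)$ at each step. Since a counter can never exceed the number of steps taken, $\comp{r_{zt}}x{r_{c_i}}$ is nonzero iff $c_i=0$, turning the unbounded zero‑test into a finite one whose forbidden set grows with the input --- precisely the use of substitution on a register‑valued polynomial that $\boldsymbol{Q}[X]$ without substitution cannot express. Without this (or an equivalent) device the reduction does not go through.

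A secondary issue is your input encoding. You propose that the tree encode the sequence of configurations $C_0,\dots,C_H$ themselves; since counter values are unbounded, they cannot live in a finite alphabet, and writing them in unary forces you to verify that consecutive unary blocks differ by one --- an equality test between unbounded values, which you yourself observe is not expressible by a vanishing polynomial. The paper avoids this by having the input encode only the sequence of \emph{transitions} (a finite alphabet), while the configurations are reconstructed and carried in registers ($r_q$, $r_{c_1}$, $r_{c_2}$); state‑consistency is then checked with the fixed polynomials $\ptestq i$, and counter‑consistency with the $r_{zt}$ trick above. Finally, the identity $\comp{x^a}x{x^b}=x^{ab}$ you invoke for ``multiplication of exponents'' is not needed anywhere: the reduction only requires incrementing, decrementing, and zero‑testing counters stored directly as integers.
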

	
	%
	
	%
	\iflong
	This undecidability result is proven by reducing the reachability problem for 2-counter machines to the equivalence problem on \aRA{\compalg}.
	We recall the definition of a 2-counter machine.
	\else
	We prove this undecidability result
	by reducing the reachability problem for 2-counter machines
	to the equivalence problem on deterministic \aRA{\compalg}
	with a monadic input (i.e. that reads words rather than trees).
	This means that the actual theorem we prove is slightly more powerful
	than Theorem~\ref{thmUndecidability}. Its full extent is described in
	Theorem~\ref{thmFullUndecidability}.
	We recall the definition of a 2-counter machine.
	\fi
	\begin{definition}
		A 2-counter machine (2CM) is a pair $\M=(Q,\rul)$, where:
		\begin{itemize}
			\item $Q$ is a finite set of states,
			\item $\rul:Q\times\lbrace 0,1 \rbrace\times\lbrace 0,1 \rbrace\rightarrow Q\times\lbrace -1,0,1 \rbrace\times\lbrace -1,0,1 \rbrace$ is a \emph{total} transition function.
		\end{itemize}
		A configuration of \M is a triplet of one state and two nonnegative integer values (or counters) $(q,c_1,c_2)\in \Q\times\Nat\times\Nat$.
		We describe how to use transitions between configurations: $(q,c_1,c_2)\rightarrow(q',c'_1,c'_2)$ if there exists $(q,b_1,b_2)\rightarrow(q',d_1,d_2)$ in $\rul$ such that for $i\in\lbrace 1,2\rbrace$:
		$c_i=0\iff b_i=0$ and $c'_i=c_i+d_i$.
		Note that to ensure that no register wrongfully goes into the negative, we assume wlog that if there exists $(q,b_1,b_2)\rightarrow(q',d_1,d_2)$ in $\rul$,
		then $d_i=-1\implies b_i=1$ (i.e. we can only decrease a non-zero counter).
		
	\end{definition}
	
	The \emph{2CM reachability problem} can be expressed as such: starting from an initial configuration $(q_0,0,0)$, can we access the state $q_f\in Q$,
	i.e. is there a configuration $(q_f,c_1,c_2)$ such that $(q_0,0,0)\rightarrow^*(q_f,c_1,c_2)$.
	It is well known that 2CM reachability is undecidable.
	
	\partitle{Reduction of 2CM Reachability to \aRA{\compalg} Equivalence}
	Let $\M=(Q,\rul)$ be a $n$-states 2CM. We rename its states $Q=\lbrace 1,\dots,n\rbrace$.
	We consider the 2CM reachability problem in \M from state $1$ to state $n$.
	
	We simulate this machine with a \aRA{\compalg} $\M'$.
	It will have only one state $q_{\M'}$: the configurations $(q,c_1,c_2)$ of $M$ will be encoded in 3 registers of $\M'$.
	It will work on a signature $\bot\cup\delta$, where $\bot$ is of rank 0 and every transition $(q,b_1,b_2)\rightarrow(q',d_1,d_2)$ of $\rul$ is a symbol of rank $1$.
	Intuitively, reading a symbol $(q,b_1,b_2)\rightarrow(q',d_1,d_2)$ in $\M'$ models executing this transition in $\M$.
	The automaton will have 6 registers: 3 to encode the configurations of $\M'$ and 3 containing auxiliary polynomials useful to test if the input sequence of transitions describes a valid computation in $M$.
	
	\noindent We encode the configurations $(q,n_1,n_2)$ in 3 registers as follows:
	\begin{itemize}
		\item register $r_{q}$ holds the (number of the) current state.
		\item registers $r_{c_1}$, $r_{c_2}$ hold $n_1, n_2$ -- the current values of the counters.
	\end{itemize}
	
	We now encode transitions in \M as register operations in \M'.
	When reading a transition $(i,b_1,b_2)\rightarrow(j,d_1,d_2)$, the update of the configuration is natural.
	However, we must ensure that we are allowed to use this transition in the current configuration.
	
	To this end, we keep in \M' a witness register. Its value will be 0 if and only if the sequence of transitions read as an input does not constitute a valid path in \M.
	To update such a register, when a transition $(i,b_1,b_2)\rightarrow(j,d_1,d_2)$ is read, we need to check that $r_{q}=i$ and that $r_{c_i}=0\iff b_i=0$.
	
	For the state $q\in\lbrace1,\dots,n\rbrace$, we design \ptestq i, a polynomial such that $\ptestq i(i)\neq0$ and for every other value $1\leqslant j \leqslant n$, $\ptestq i(j)=0$:
	$\ptestq i(x):=\prod_{1\leqslant j \leqslant n}^{i\neq j}{(x-j)}$.
	This approach cannot work for counters, as there is no absolute bound to their value.
	To remedy that problem, we will design \emph{for each $m$} a polynomial \ptestc m such that $\ptestc m(0)\neq 0$ and for every other value, $1\leqslant j \leqslant m$, $\ptestc m(j)=0$.
	$\ptestc m(x):=\prod_{1\leqslant j \leqslant m}{(x-j)}$.
	Intuitively, \ptestc m works as a test for counters in the $m$-th step of \M, since counters $c_1,c_2$ cannot exceed the value $m$ at that point.
	This means that \ptestc m will have to be stored and updated in a register of its own.
	To this end, we introduce the last three registers of $\M'$:
	\begin{itemize}
		\item the register $r_+$. After $m$ steps, $r_+=m$.
		\item the register $r_{zt}$. After $m$ steps, $r_{zt}=\ptestc m$.
		\item the witness register $r_w$. After $m$ steps, $r_w\neq 0\iff$ we read a valid path in $\M$.
	\end{itemize}
	
	%
	We describe how to update the registers of $\M'$ when reading an input symbol $(i,b_1,b_2)\rightarrow(j,d_1,d_2)$.
	Note that according to our definition of \aRA{\compalg}, the new values $\ol{r}'$ are computed as a function of the old value of $\ol r$.
	This means that any value on the right of the assignation symbol $\leftarrow$ is the value before reading $(i,b_1,b_2)\rightarrow(j,d_1,d_2)$.
	\begin{itemize}
		\item $r_q\leftarrow j$, $r_{c_1}\leftarrow r_{c_1}+d_1$, $r_{c_2}\leftarrow r_{c_2}+d_2$,
		\item $r_+\leftarrow r_++1$, $r_{zt}\leftarrow r_{zt}\times (x-r_+-1)$,
		\item $r_w\leftarrow r_w \times T_q \times T_{1} \times T_{2}$, where:
		\begin{itemize}
			\item $T_q=\comp {\ptestq i}x{r_q}$,
			\item for $i\in\lbrace 1, 2 \rbrace$, $T_i=\begin{cases}
			r_{c_i} \text{ if }b_i=1, \\
			\comp{r_{zt}}x{r_{c_i}} \text{ if }b_i=0.
			\end{cases}$
		\end{itemize}
	\end{itemize}
	
	%
	This update strategy ensures that each counter does what we established its role to be.
	The only register for which this is not trivial is $r_w$. We show that
	$r_w=0$ if and only if we failed to read a proper path in $\M$.
	
	We proceed by induction on the number of steps.
	The induction hypothesis is that a mistake happened before the $m$-th step if and only if $r_w=0$ before reading the $m$-th symbol.
	If such is the case, $r_w$ will stay at zero for every subsequent step, as the new value of $r_w$ is always a multiple of the previous ones.
	If the error occurs exactly at the $m$-th step, it means that the $m$-th letter of the input
	was a transition $(i,b_1,b_2)\rightarrow(j,d_1,d_2)$,
	but $r_q$ was not $i$ (and hence $T_q=\comp {\ptestq i}x{r_q}=0$),
	or that for this transition to apply we need the counter $c_i$ to be $0$ when it was not (or conversely assumed it $>0$ when it was $0$).
	This last case is caught by $T_i$. 
	By using $T_i=r_{c_i}$, we have $T_i=0$ exactly when we were wrong.
	If $b_i=0$ then we assume $c_i=0$. We know that $c_i\leq m$, where $m$ is the number of step taken. 
	By using $T_i=\comp{r_{zt}}x{r_{c_i}}=\ptestc m(c_i)$, we have that $T_i=0$ exactly when $0<c_i\leq m$.
	
	%
	The final step of the reduction comes by picking the output function for the only state of $\M'$.
	We pick $f(\ol r):=\comp {\ptestq i}x{r_q} \times r_w$.
	The only way for the output to not be 0 is if $r_q$ ends in $n$ (i.e. we reached state $n$)
	and if $r_w\neq 0$ (i.e. we used a valid path). In other words, the following Lemma holds.
	
	\begin{lemma}\label{lemMisMprime}
		$\sem{\M'}$ is the constant 0 function if and only if n is not reachable from 1 in $\M$
	\end{lemma}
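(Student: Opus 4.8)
\noindent\emph{Proof plan.}
The automaton $\M'$ has already been described in full, so all that remains is to verify its behaviour, and I would do this through a single invariant proved by induction on the length of the input. First note that the input signature --- $\bot$ of rank $0$ together with one rank-$1$ letter for each transition of $\rul$ --- is monadic, so every input tree is a chain and $\M'$ is in fact a deterministic word automaton (hence functional, matching the stronger statement of Theorem~\ref{thmFullUndecidability}). Write $t_m$ for the input whose transition letters, read from $\bot$ upward, are $\tau_1,\dots,\tau_m$, and let $\ol r^{(m)}=(r_q,r_{c_1},r_{c_2},r_+,r_{zt},r_w)$ be the register tuple reached on $t_m$. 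The invariant I would establish, for all $m\geq 0$, is: \emph{(i)} $r_+^{(m)}=m$ and $r_{zt}^{(m)}=\ptestc{m}$; \emph{(ii)} if $\tau_1\cdots\tau_m$ is a legal run of $\M$ from $(1,0,0)$ reaching $(q,c_1,c_2)$, then $r_q^{(m)}=q$, $r_{c_\ell}^{(m)}=c_\ell$ for $\ell\in\{1,2\}$, and $r_w^{(m)}\neq 0$; \emph{(iii)} otherwise $r_w^{(m)}=0$. One also records that $r_q^{(m)}\in\{1,\dots,n\}$ always, since it starts at $1$ and every update sets it to a state of $\M$.

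The base case $m=0$ is immediate from the initial register values ($r_+=0$, $r_{zt}=\ptestc{0}=1$, $r_q=1$, $r_{c_\ell}=0$, $r_w=1$), the empty run being legal and reaching $(1,0,0)$. For the inductive step let the last letter be $\tau_m=(i,b_1,b_2)\to(j,d_1,d_2)$. Part \emph{(i)} follows directly from the updates $r_+\leftarrow r_++1$ and $r_{zt}\leftarrow r_{zt}\times(x-r_+-1)$, since $\ptestc{m-1}\cdot(x-m)=\ptestc{m}$. For $r_w$: if $\tau_1\cdots\tau_{m-1}$ is already illegal, then $r_w^{(m-1)}=0$ by induction, hence $r_w^{(m)}=0$ because the update only multiplies the previous value, and $\tau_1\cdots\tau_m$ is illegal too --- this is the single place where one uses that an error, once committed, is never undone.

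Otherwise $\tau_1\cdots\tau_{m-1}$ is legal, reaching $(q,c_1,c_2)$ with $q=r_q^{(m-1)}$, $c_\ell=r_{c_\ell}^{(m-1)}$, and $r_w^{(m-1)}\neq 0$. I would then check that $T_q\cdot T_1\cdot T_2$ is nonzero exactly when $\tau_m$ is applicable at $(q,c_1,c_2)$, i.e.\ when $q=i$ and $c_\ell=0\Leftrightarrow b_\ell=0$. Indeed $T_q=\comp{\ptestq{i}}{x}{r_q^{(m-1)}}=\ptestq{i}(q)$ is nonzero iff $q=i$, since $q\in\{1,\dots,n\}$ and $\ptestq{i}$ vanishes exactly on $\{1,\dots,n\}\setminus\{i\}$. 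If $b_\ell=1$ then $T_\ell=r_{c_\ell}^{(m-1)}=c_\ell$ is nonzero iff $c_\ell\neq 0$, which is the side condition. If $b_\ell=0$ then $T_\ell=\comp{r_{zt}^{(m-1)}}{x}{r_{c_\ell}^{(m-1)}}=\ptestc{m-1}(c_\ell)$; here one uses the key observation that along a legal prefix $0\leq c_\ell\leq m-1$ (the upper bound because each step changes a counter by at most $1$, the lower bound because $d_\ell=-1\Rightarrow b_\ell=1$), so $\ptestc{m-1}(c_\ell)$ is nonzero iff $c_\ell=0$, again the side condition. When $\tau_m$ is applicable, the configuration registers update to $(j,c_1+d_1,c_2+d_2)$, the new configuration, and $r_w^{(m)}=r_w^{(m-1)}\cdot T_q\cdot T_1\cdot T_2\neq 0$, giving \emph{(ii)}; when it is not, one of the three factors is $0$, so $r_w^{(m)}=0$ and $\tau_1\cdots\tau_m$ is illegal, giving \emph{(iii)}.

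Finally, the output on $t_m$ is $\sem{\M'}(t_m)=\comp{\ptestq{n}}{x}{r_q^{(m)}}\times r_w^{(m)}=\ptestq{n}(r_q^{(m)})\cdot r_w^{(m)}$, whose first factor tests whether the run ended in the target state $n$; it is nonzero iff $r_q^{(m)}=n$ (using again $r_q^{(m)}\in\{1,\dots,n\}$) and $r_w^{(m)}\neq 0$, which by \emph{(ii)}--\emph{(iii)} holds iff $\tau_1\cdots\tau_m$ is a legal run of $\M$ from $(1,0,0)$ ending in state $n$. Hence $\sem{\M'}$ takes a nonzero value on some input iff $n$ is reachable from $1$ in $\M$, which is exactly the contrapositive of the claim. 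The only genuinely delicate point is the bookkeeping in the inductive step: keeping the index on $\ptestc{}$ aligned, and observing that this polynomial is consulted only when the relevant counter is provably at most the current step count --- legitimate because any earlier violation has already pinned $r_w$ at $0$, so it suffices to analyse legal prefixes.
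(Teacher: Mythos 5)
Your proof is correct and follows essentially the same route as the paper: an induction on the number of input letters showing that $r_w$ vanishes exactly when the read sequence fails to be a legal run, combined with the final state test in the output function. Your version is in fact slightly more careful than the paper's sketch (explicit invariant, the off-by-one bookkeeping between $r_{zt}=\ptestc{m-1}$ and the bound $c_\ell\leq m-1$, and reading the paper's output polynomial as $\ptestq{n}$), but the underlying argument is identical.
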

	
	By comparing $\M'$ to a \aRA\compalg $M_0$ performing the constant 0 function,
	we get that deciding equivalence on functional \aRA\compalg would allow to decide 2CM Reachability.
	Similarly, running nondeterministically $\M'$ and $M_0$,
	we get that deciding functionality on \aRA\compalg would allow to decide 2CM Reachability.
	This leads to the proof of Theorem~\ref{thmUndecidability}.
	More than that, we show a more thorough result:
	
	\begin{theorem}\label{thmFullUndecidability}
		The equivalence problem for deterministic \aRA\compalg is undecidable, even with a monadic input alphabet.
		The functionality problem for \aRA\compalg is undecidable, even with a one-letter monadic alphabet.
	\end{theorem}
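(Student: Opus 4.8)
Both claims strengthen Theorem~\ref{thmUndecidability} and, I expect, follow from the reduction just carried out with only cosmetic extra work; the plan is to repackage the automaton $\M'$ rather than to reason afresh.

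For the equivalence claim I would observe that $\M'$ is already \emph{deterministic} — it has one state $q_{\M'}$ and, for each input symbol ($\bot$, or a rank-$1$ transition symbol $\tau\in\rul$), exactly one rule with a fixed update polynomial — and that its signature $\{\bot\}\cup\rul$, with $\rank(\bot)=0$ and all $\rank(\tau)=1$, is monadic, so $\M'$ reads words. Let $M_0$ be the one-state deterministic \aRA\compalg over the same signature with constant updates and constant output $0$; both $\M'$ and $M_0$ are total, hence have the same domain. Lemma~\ref{lemMisMprime} gives $\sem{\M'}=\sem{M_0}$ if and only if $n$ is unreachable from $1$ in $\M$, so the computable map $\M\mapsto(\M',M_0)$ reduces 2CM reachability to equivalence of deterministic \aRA\compalg over a monadic alphabet, which is therefore undecidable.

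For the functionality claim the idea is to move the choice of transition from the input alphabet into the automaton's nondeterminism, leaving a single rank-$1$ letter. I would build a nondeterministic \aRA\compalg $\M''$ over $\{\bot,a\}$ with $\rank(\bot)=0$, $\rank(a)=1$ and two states $q_A,q_B$. On $\bot$ put two rules: $\bot\to q_A$ with fixed register values, and $\bot\to q_B$ with the initialisation $r_q\leftarrow 1$, $r_{c_1}\leftarrow 0$, $r_{c_2}\leftarrow 0$, $r_+\leftarrow 0$, $r_{zt}\leftarrow 1$, $r_w\leftarrow 1$. In $q_A$, reading $a$ keeps the state with constant updates, and the output on $q_A$ is the constant $0$. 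In $q_B$, for each $\tau\in\rul$ put a rule $a(q_B)\to q_B$ performing exactly the register update $\M'$ uses on the symbol $\tau$, and let the output on $q_B$ be $f(\ol r):=\comp{\ptestq n}{x}{r_q}\times r_w$. Then on input $a^m(\bot)$ the $q_A$-run outputs $0$, while a $q_B$-run corresponds to a guessed length-$m$ sequence of transitions executed from $(1,0,0)$, whose output, by the invariant on $r_w,r_q,r_{zt},r_+$ already verified for $\M'$, is nonzero exactly when that sequence is a valid computation of $\M$ ending in state $n$. Since every register update is a total polynomial function, every run (of either kind, for every guess) is defined and $\sem{\M''}$ is total; hence $\sem{\M''}$ is a function if and only if no $q_B$-run ever produces a nonzero output, i.e. if and only if $n$ is unreachable from $1$ in $\M$. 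Undecidability of 2CM reachability then yields undecidability of functionality of \aRA\compalg over a one-letter monadic alphabet.

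I do not anticipate a genuine obstacle here; the one point to get right — and the only place where the argument could go wrong — is that the two nondeterministic branches of $\M''$ must have \emph{exactly} the same domain, so that a failure of functionality can only be witnessed by disagreeing output \emph{values}, never by a domain mismatch. This is why the $q_A$-branch is kept total and why the $q_B$-branch, being built solely from the total polynomial updates of $\M'$, is total for every sequence of guesses; with that in hand everything else is bookkeeping already done for $\M'$.
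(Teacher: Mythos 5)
Your proposal is correct and follows essentially the same route as the paper: the equivalence claim is read off directly from the determinism and monadic signature of $\M'$ via Lemma~\ref{lemMisMprime}, and the functionality claim is obtained by collapsing the transition alphabet to a single unary letter and pushing the choice of transition into nondeterminism. Your explicit always-zero branch $q_A$ is a small extra safeguard the paper omits (it implicitly relies on the fact that an invalid guess, hence a zero-outputting run, always exists), but this does not change the argument in any essential way.
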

	
	The first point is given directly by the point above: $\M'$ is a deterministic \aRA\compalg on a monadic alphabet $\bot\cup\delta$, that is to say, the input of $M'$ is a word where each letter is an element of $\delta$.
	
	For the second point, we imagine a slight alteration of this proof where the input alphabet is $\lbrace a,\bot\rbrace$ where $\bot$ is of rank 0 and $a$ of rank 1, that is to say, the input of the \aRA\compalg would be a word of form $a^k$.
	In this new version, $M'$ is no longer deterministic, but guesses each time what transition of $M$ to emulate.
	When $M'$ reads $a^k$, it either guesses a correct path of length $k$, or makes a mistake and returns 0.
	$M'$ is functional iff it cannot guess a run that produces something else than 0, i.e. iff n is not reachable from 1 in $\M$.

	\section{Conclusion}
	
	We use ``Hilbert Methods'' to study equivalence problems on register automata.
	To apply these methods to register automata on contexts, we consider algebras with a substitution operation.
	To show the decidability of equivalence on \aRA{\ucf}, a class that subsumes MSO-definable transformations in unordered forests,
	we use the fact that bounded degree substitution can be encoded into $+,\times$ in $\Q[X]$.
	However, when applying the same method to Macro Tree Transducers,
	we are led to consider register automata on $\Q[X]^{\mathsf{subs}}$,
	whose equivalence we prove to be undecidable.
	In essence, for the ``Hilbert Methods'' we consider to provide positive results,
	it seems necessary to limit the use of composition.
	
	Future developments of this work could then consist of finding other acceptable restrictions
	on the use of composition in $\Q[X]$ that still allows for decidability results in register automata.
	Another possible avenue is to use the properties of $\simulatedBy$ to prove negative results:
	if $\aalg\simulatedBy\balg$, and register automata have undecidable problems in \aalg,
	then this negative results propagates to \balg.
	Finally, ``Hilbert Methods'' can apply to a huge variety of algebras (e.g. \ucf in this paper or $\Q^n$ in~\cite{BenediktDSW17}).
	They provide decidability results on register automata on algebras with nontrivial structure properties like commutativity of operations (e.g. children in \ucf) that make the usual methods to decide equivalence difficult to apply. 
	
	\textbf{Acknowledgments} {We would like to thank Miko{\l}aj Boja\'nczyk, from the University of Warsaw, for introducing us to the topic of using the Hilbert Method to decide equivalence of register automata, and for his active participation in the finding of this result and the writing of this paper.\\
		This work was supported by the European Research Council (ERC) under the European Union’s Horizon 2020 research and innovation programme (ERC consolidator grant LIPA, agreement no. 683080) and partially supported by the NCN grant 2016/21/B/ST6/01505.}
	\bibliographystyle{plainurl}
	\bibliography{bib}
	\iflong
	\else
	\newpage
	\section*{Appendix}
	
	\partitle{Bottom-Up Streaming Tree Transducers}
	We go more into detail into \emph{Streaming Tree Transducers} (STT), that read and output nested words.
	This formalism is central to the proof of Proposition~\ref{propMSOtoSTT}, and of Proposition~\ref{propSTTtoaRA}.
	
	In intuition, an STT works with a configuration composed of a state, a finite number of typed variables (or registers)
	that contains nested words with at most one occurrence of a context symbol (this corresponds to the Ordered Forest Algebra (OCF) in the sense of~\cite{BojanczykWalukiewicz08}),
	and a stack containing pairs of stack symbols and variable valuations.
	The nesting of the words dictates how this stack behaves:
	each opening letter ${<}a$ stores the current variable values in the stack to start with fresh ones,
	then each closing letter $a{>}$ uses the current variable values and the top of the stack to generate new values for the registers.
	The operations on nested words that can be performed in such cases correspond to polynomial operations on OCF:
	one can use concatenation, context application (which translates directly into OCF),
	or use a constant nested word, that can be simulated by the roots and concatenation:
	$({<}a\; \circ\; a{>})[\circ:=r\cdot r']{<}b \; b{>}$ can be seen in forests as $\root_a(r+r')+\root_b()$. 
	
	The general definitions are available on~\cite{AlurDAntoni17}.
	We use specifically \emph{bottom-up} STT, where reading an opening symbol ${<}a$ resets the state as well as the registers.
	On such STT, the behavior of a STT reading the nested word of a subtree does not depend on what occurs before or after.
	The original paper also imposes a single-use restriction, to ensure each operation can use each register only once. We can keep this restriction, but will not need it.
	We add a few restrictions to this model:
	\begin{itemize}
		\item We do not allow letters beyond nesting letters. In the language of~\cite{AlurDAntoni17} this means we ignore internal transitions.
		\item The input domain is a language of nested words of binary trees.
	\end{itemize}
	
	We will call this subclass Bottom-Up BT STT.
	The first result we need comes directly from the results of~\cite{AlurDAntoni17}
	that states that single-use STT (even limited to bottom-up) describe exactly MSO functions on nested words:
	
	\begin{proposition}\label{propMSOtoSTTannex}
		Every function of \msotype {BT}{UF} is described by a Bottom-Up BT STT.
	\end{proposition}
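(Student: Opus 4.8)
The plan is to obtain this proposition essentially by importing two theorems of Alur and D'Antoni~\cite{AlurDAntoni17} and adjusting the input/output presentations to ours. Their streaming tree transducers operate on nested words, and both binary trees and ordered forests admit canonical nested-word encodings, whereas an unordered forest is what one gets from an ordered forest by forgetting sibling order. So I would first turn the given $f\in\msotype{BT}{UF}$ into a function $\hat f\in\msotype{BT}{OF}$ that \emph{represents} it, using exactly the artificial-ordering device from the proof of Lemma~\ref{lemMSOtrad}: the output nodes generated for a copy $x_i$ of an input node are totally ordered by the infix position of their generating input node (breaking ties by copy index), this order is MSO-definable on binary trees, and hence each child relation $\varphi_{\child,i,j}$ can be replaced by first-child and next-sibling relations $\varphi_{\fc,i,j}$, $\varphi_{\ns,i,j}$. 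Read through the nested-word encodings, $\hat f$ is an MSO-definable transduction from the regular set of nested words of binary trees to nested words of ordered forests, and forgetting the sibling order of $\hat f(t)$ returns $f(t)$.

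Next I would invoke Theorem~4.6 of~\cite{AlurDAntoni17}, which states that every MSO-definable transduction of nested words is realized by a (single-use) STT; this produces an STT $S$ with $\sem{S}=\hat f$. Then Theorem~3.7 of~\cite{AlurDAntoni17} converts $S$ into an equivalent \emph{bottom-up} STT, i.e.\ one that resets both state and registers on every opening tag, without changing the input alphabet or introducing internal letters. Since the input domain is the set of nested words of binary trees and no internal letters occur, the resulting transducer is a Bottom-Up BT STT in the present sense; as its output nested words, read as unordered forests, compute $f$, this establishes the proposition.

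The only real work here is bookkeeping rather than mathematics: one must check that the artificial-ordering rewriting stays an MSO transduction once input and output are presented as nested words, and that the syntactic restrictions we place on Bottom-Up BT STT (binary-tree input domain, absence of internal transitions, and the single-use restriction, which we do not actually need) are preserved by — indeed follow from — the constructions underlying Theorems~3.7 and~4.6. I expect both points to be routine once the encodings are fixed, so the substance of the proof is entirely inherited from~\cite{AlurDAntoni17}.
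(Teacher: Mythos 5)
Your proposal is correct and follows essentially the same route as the paper, which likewise obtains this proposition directly from Theorems~4.6 and~3.7 of~\cite{AlurDAntoni17} (MSO-definability implies realizability by an STT, and STT can be made bottom-up), with the remaining work being bookkeeping about nested-word encodings and the syntactic restrictions on Bottom-Up BT STT. Your preliminary step of artificially ordering the output to pass from $\msotype{BT}{UF}$ to $\msotype{BT}{OF}$ is, in the paper's architecture, already carried out upstream by Lemma~\ref{lemMSOtrad}, so it is harmless duplication rather than a divergence.
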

	
	\partitle{From STT to \aRA{\ucf}}
	To complete the proof, we show that if a Bottom-Up BT STT describes $f$,
	then we can find a \aRA{\ucf} that describes the function $f'$ that $f$ represents, by forgetting the order in the output.
	\begin{proposition}\label{propSTTtoaRAannex}
		Every function of a Bottom-Up BT STT is described by an \aRA{OCF}.
	\end{proposition}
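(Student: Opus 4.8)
The plan is to simulate the bottom‑up BT STT directly by an \aRA{OCF} that reads the binary input tree itself, rather than its nested‑word encoding, exploiting the fact that for a bottom‑up STT the nested word of a subtree $t=a(t_1,t_2)$ has the form ${<}a\,\mathrm{nw}(t_1)\,\mathrm{nw}(t_2)\,a{>}$ (and a leaf $\bot$ contributes a bracket pair ${<}\bot\,\bot{>}$, since internal letters are disallowed), so that the push on ${<}a$ and the pop on $a{>}$ make the stack discipline of the STT mirror the recursive structure of the tree. I would then read off the STT's output from the designated output register of the configuration reached after the whole input is consumed.

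The heart of the argument is an inductive invariant describing the effect of feeding $\mathrm{nw}(t)$ to the STT. Because reading the leading ${<}a$ resets both state and registers and pushes the incoming configuration, and reading the trailing $a{>}$ pops it back, this effect is a \emph{summary function} $\alpha_t$ mapping an incoming configuration (state, register valuation) to an outgoing one, and $\alpha_t$ depends only on $t$. Moreover $\alpha_t$ has a rigid form: the outgoing state is a function $\sigma_t\colon Q\to Q$ of the incoming state alone, and, for each incoming state $q$, the outgoing valuation is an OCF‑polynomial function $A^q_t$ of the incoming valuation whose ``shape'' (how registers and constant OCF‑elements are arranged) is one of finitely many shapes, namely restrictions of the STT's closing‑transition updates to their ``current subtree'' argument group, the constants being a bounded, OCF‑typed tuple of elements. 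I would prove this by induction: for $t=\bot$ it comes from the single rule on $\bot{>}$; for $t=a(t_1,t_2)$, processing $\mathrm{nw}(t_1)\,\mathrm{nw}(t_2)$ from the reset configuration yields $(q_2,\ol v_2)=\alpha_{t_2}(\alpha_{t_1}(q_0,\ol v_0))$, a bounded tuple depending only on $t_1,t_2$, and the update on $a{>}$ then gives $\sigma_{a(t_1,t_2)}\colon q\mapsto\delta(q_2,q,a)$ and $A^q_{a(t_1,t_2)}\colon\ol v\mapsto\mathrm{update}_a(q_2,q)(\ol v_2,\ol v)$. Here $\sigma_t$ and the shape of $A^\bullet_t$ are finite objects, while $\ol v_2$ is obtained from the children's summaries by applications and compositions of OCF‑polynomial functions (which are closed under composition).

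The \aRA{OCF} is now built to carry exactly this data. Its state records $\sigma_t$, the shape of $A^\bullet_t$, and a regular bit tracking membership in the STT's domain; its (finitely many, OCF‑typed) registers hold the constant tuple needed to reconstruct the family $A^\bullet_t$. The transition on an input letter $a$ with child states $q^{(1)},q^{(2)}$ reads the two child summaries off the child states and registers and assembles the summary of $a(t_1,t_2)$ by the recursion above: every step is either a finite table lookup (for the state part) or an application/composition of OCF‑polynomial functions of the children's register values (for the register part), so it is a legal \aRA{OCF} rule, and typing is preserved because the STT updates are typed. The partial output function on state $q$ applies $A^{q_0}_{T}$ to the constant initial valuation $\ol v_0$ — an OCF‑polynomial function of the registers whose shape is read off $q$ — and projects onto the designated output register, defined precisely when $q$ records domain‑membership and a state on which the STT produces output; this reproduces exactly the configuration the STT reaches after reading $\mathrm{nw}(T)$ from its initial configuration, hence the STT's output.

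I expect the main obstacle to be the bookkeeping in the inductive step: making the ``shape'' of the summary functions $A^\bullet_t$ precise, checking that only finitely many shapes ever occur (so the finite control suffices) and that the recursion genuinely preserves this form, and verifying that all the configuration manipulations forced by the push/pop discipline unfold into honest OCF‑polynomial functions, including the handling of the two sorts (forests versus one‑hole contexts) and of the STT's partial domain. Copylessness of the STT is convenient but, as remarked in the paper, not actually required here: an \aRA{OCF} may use copyful updates, and the construction never composes the variable parts of two summary functions with each other, so no blow‑up of representations occurs.
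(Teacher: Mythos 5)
Your proposal is correct and takes essentially the same route as the paper: your summary $\alpha_t$ (finite part $\sigma_t$ plus shape, unbounded part the constant tuple) is exactly the paper's choice of storing in the \aRA{OCF} the pair $(q,a)$ of the STT state right before the subtree's closing bracket together with the root label, and the register values at that same point. Your inductive assembly of $\alpha_{a(t_1,t_2)}$ from the children's summaries is precisely the paper's transition $c((q,a),(q',b))\rightarrow ((q_{a,b,q,q'},c),\phi_{a,b,q,q'})$, so the two proofs coincide up to presentation.
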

	\begin{proof}
		We propose in a figure below the run of a bottom-up STT in a tree $c(t,t')$.
		The subtrees $t$ and $t'$ are of root $a$ and $b$.
		The second line corresponds to its configuration (state $q$, register values $\ol r$),
		while the third line keeps track of the top symbol of the STT's stack.
		The state $q_0$ and register valuation $\ol{r_0}$ are respectively the initial state and register values.
		The symbol that was at the top of the stack when reaching ${<}c$ is denoted as $\lambda$.
		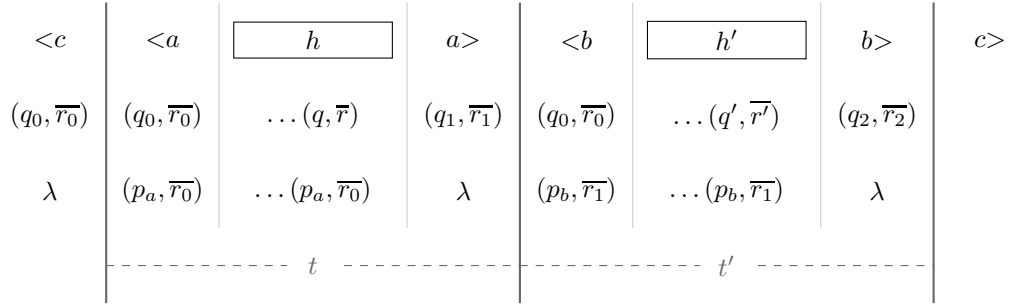
\begin{figure}[h]
			\begin{center}
				\begin{tikzpicture}[every node/.append style={minimum width=20}]
				\node (opf) at (0,0) {${<}c$};
				\node (opa)  at (1.5,0) {${<}a$};
				\node[draw, minimum width=60] (ina) at (3.5,0) {$h$};
				\node (cla) at (5.5,0) {$a{>}$};
				\node (opb) at (7,0) {${<}b$};
				\node[draw, minimum width=60] (inb) at (9,0) {$h'$};
				\node (clb) at (11,0) {$b{>}$};
				\node (clf) at (12.5,0) {$c{>}$};
				
				\node (confopf) [below of= opf] {$(q_0,\ol{r_0})$};
				\node (stackopf) [below of= confopf] {$\lambda$};
				
				\node (confopa) [below of= opa] {$(q_0,\ol{r_0})$};
				\node (stackopa) [below of= confopa] {$(p_a,\ol{r_0})$};
				\node (confinat) [below of= ina] {\dots $(q,\ol{r})$};
				\node (stackinat) [below of= confinat] {\dots $(p_a,\ol{r_0})$};
				\node (confcla) [below of= cla] {$(q_1,\ol{r_1})$};
				\node (stackcla) [below of= confcla] {$\lambda$};
				
				\node (confopb) [below of= opb] {$(q_0,\ol{r_0})$};
				\node (stackopb) [below of= confopb] {$(p_b,\ol{r_1})$};
				\node (confinbt) [below of= inb] {\dots $(q',\ol{r'})$};
				\node (stackinbt) [below of= confinbt] {\dots $(p_b,\ol{r_1})$};
				\node (confclb) [below of= clb] {$(q_2,\ol{r_2})$};
				\node (stackcla) [below of= confclb] {$\lambda$};
				
				\draw[thick, color=black!60] (0.75,0.5) -- (0.75,-3.5);
				\draw[thick, color=black!60] (6.25,0.5) -- (6.25,-3.5);
				\draw[thick, color=black!60] (11.75,0.5) -- (11.75,-3.5);
				
				\draw[color=black!20] (2.25,0.5) -- (2.25,-2.5);
				\draw[color=black!20] (4.75,0.5) -- (4.75,-2.5);
				\draw[color=black!20] (7.75,0.5) -- (7.75,-2.5);
				\draw[color=black!20] (10.25,0.5) -- (10.25,-2.5);

				\draw[dashed, color=black!60] (0.75,-3) edge node[fill=white, minimum width=20] {$t$} (6.25,-3);
				\draw[dashed, color=black!60] (6.25,-3) edge node[fill=white, minimum width=20] {$t'$} (11.75,-3);
				%
				\end{tikzpicture}
			\end{center}
			\caption{The run of a bottom-up STT on a binary tree $c(t,t')$}\label{figSTTannex}
		\end{figure}
		
		From $(q,\ol{r})$, when we read $a{>}$, we use a transition depending only on $q,p_a$ to get $q_1$
		and apply to $\ol r,\ol r_0$ a polynomial function depending only on $q,p_a$.
		We note that $p_a$ depends only of $a$.
		Similarly, from $(q',\ol{r'})$, when we read $b{>}$, we use a transition depending only on $q',p_b$ to get $q_2$
		and apply to $\ol r,\ol{ r_0}$ a polynomial function depending only on $q',p_b$.
		We note that $p_b$ depends of $b$ and $q_1$.
		
		This means that, if we have prior knowledge of $a,b$ -- the roots of the left and right child of $c$ --
		and $q,q'$ -- the states reached by our STT after reading the left and right child of $c$ --
		we have enough information to find the state reached by our STT after reading $c(t,t')$.
		We can call this state $q_{a,b,q,q'}$.
		From $a,b,q,q'$, we can also deduce the polynomial function that links $\ol{r},\ol{r'}$ to $\ol{r_2}$, the values of the registers after reading $c(t,t')$.
		We call such a function $\phi_{a,b,q,q'}$, and it can be seen as a polynomial function on nested words or on OCF.
		
		To find an \aRA{OCF} that computes this function, we say that an input subtree leads to a state $(q,a)$,
		where $a$ is the label of its root, and $q$ is the state reached by the STT right before reading the final $a{>}$.
		The registers have the same values as the STT's right before reading  the final $a{>}$.
		The transitions of our \aRA{OCF} will then be of form:
		$$c((q,a),(q',b))\rightarrow ((q_{a,b,q,q'},c),\phi_{a,b,q,q'})$$
	\end{proof}
	
	\fi
\end{document}